\newtheorem{example}[theorem]{Example}
\newtheorem{lemma}[theorem]{Lemma}
\newtheorem{definition}[theorem]{Definition}
\newtheorem{corollary}[theorem]{Corollary}
\newtheorem{conjecture}[theorem]{Conjecture}
\newtheorem{remark}[theorem]{Remark}
\numberwithin{equation}{theorem}
\pgfplotsset{compat=1.13}
\tikzset{state/.style={rond5,minimum size=0.65cm}}
\tikzset{sink/.style={diam,minimum size=0.65cm}}
\tikzset{target/.style={rect,minimum size=0.6cm}}
\newcommand{\PO}{Player~$1$\xspace}
\newcommand{\PT}{Player~$2$\xspace}
\newcommand{\PLi}{Player~$i$\xspace}
\newcommand{\zug}[1]{\langle #1  \rangle}
\newcommand{\stam}[1]{}
\renewcommand{\set}[1]{\{ #1  \}}
\newcommand{\buchi}{B{\"u}chi~}
\newcommand{\pari}{\texttt{Parity}\xspace}
\newcommand{\frpari}{\texttt{FrParity}\xspace}
\newcommand{\frbuchi}{\texttt{Fr\buchi}\xspace}
\newcommand{\thresh}{\texttt{Th}\xspace}
\def\TUG{\operatorname{TOW}}
\def\tug{\operatorname{tow}}
\NewDocumentCommand{\Ngh}{mg}{\texttt{N}_{\IfNoValueTF{#2}{}{#2}}(#1)}
\NewDocumentCommand{\ThPo}{mg}{\thresh^{\pari}_1(\IfNoValueTF{#2}{}{#2,\xspace} #1)}
\NewDocumentCommand{\ThPt}{mg}{\thresh^{\pari}_2(\IfNoValueTF{#2}{}{#2,\xspace} #1)}
\NewDocumentCommand{\ThfrP}{mgg}{\thresh^{\frpari}_{#1}\IfNoValueTF{#2}{}{(\IfNoValueTF{#3}{}{#3,\xspace} #2)}}
\NewDocumentCommand{\ThfrB}{mmg}{\thresh^{\frbuchi}_{#1}(\IfNoValueTF{#3}{}{#3, \xspace} #2)}
\NewDocumentCommand{\frP}{g}{\texttt{frP}\IfNoValueTF{#1}{}{(#1)}}
\NewDocumentCommand{\frB}{g}{\texttt{frB}\IfNoValueTF{#1}{}{(#1)}}
\NewDocumentCommand{\Sld}{mg}{\texttt{S}_{#1}\IfNoValueTF{#2}{}{(#2)}\xspace}
\NewDocumentCommand{\Sed}{mg}{\texttt{T}_{#1}\IfNoValueTF{#2}{}{(#2)}\xspace}
\NewDocumentCommand{\FixSld}{g}{\texttt{S}\IfNoValueTF{#1}{}{(#1)}\xspace}
\NewDocumentCommand{\FixSed}{g}{\texttt{T}\IfNoValueTF{#1}{}{(#1)}\xspace}
\NewDocumentCommand{\buildto}{mg}{\texttt{t}_{#1}^0\IfNoValueTF{#2}{}{(#2)}\xspace}
\NewDocumentCommand{\buildti}{mgg}{\texttt{t}_{#1}^{\IfNoValueTF{#3}{i}{#3}}\IfNoValueTF{#2}{}{(#2)}\xspace}
\DeclarePairedDelimiter\ceil{\lceil}{\rceil}
\DeclarePairedDelimiter\floor{\lfloor}{\rfloor}
\newcommand{\G}{{{\cal G}}}
\newcommand{\play}{\text{play}\xspace}
\newcommand{\Nat}{\mathbb{N}}
\newcommand{\race}[1]{\text{race}(#1)\xspace}
\NewDocumentCommand{\stepbudget}{mmm}{T_{#3}^{#2}(#1)}
\NewDocumentCommand{\budget}{mm}{T_{#2}(#1)}
\NewDocumentCommand{\gamestepbound}{mm}{\mathsf{Steps}_{#1}(#2)}
\NewDocumentCommand{\stepop}{mmmm}{\mathrm{step}_{#4}(#1, #2, #3)}
\NewDocumentCommand{\step}{mmm}{\mathrm{step}_{#3}(#1, #2)}
\crefname{lemma}{Lem.}{Lemms.}
\crefname{theorem}{Thm.}{Thms.}
\crefname{corollary}{Cor.}{Cors.}
\crefname{equation}{Eq.}{Eqs.}
\crefname{figure}{Fig.}{Figs.}
\crefname{tabular}{Tab.}{Tabs.}
\begin{document}
	
\begin{frontmatter}

\title{Reachability Poorman Discrete-Bidding Games}

\author[A]{\fnms{Guy}~\snm{Avni}}
\author[B,C]{\fnms{Tobias}~\snm{Meggendorfer}}
\author[A]{\fnms{Suman}~\snm{Sadhukhan}}
\author[D]{\fnms{Josef}~\snm{Tkadlec}}
\author[C]{\fnms{Đorđe}~\snm{Žikelić}} 

\address[A]{University of Haifa}
\address[B]{Technical University of Munich}
\address[C]{Institute of Science and Technology Austria}
\address[D]{Harvard University}


\begin{abstract}
We consider {\em bidding games}, a class of two-player zero-sum {\em graph games}. The game proceeds as follows. Both players have bounded budgets. 
A token is placed on a vertex of a graph, in each turn the players simultaneously submit bids, and the higher bidder moves the token, where we break bidding ties in favor of \PO. 
\PO wins the game iff the token visits a designated target vertex. 
We consider, for the first time, {\em poorman discrete-bidding} in which the granularity of the bids is restricted and the higher bid is paid to the bank. 
Previous work either did not impose granularity restrictions or considered {\em Richman} bidding  (bids are paid to the opponent).
While the latter mechanisms are technically more accessible, the former is more appealing from a practical standpoint. Our study focuses on {\em threshold budgets}, which is the necessary and sufficient initial budget required for \PO to ensure winning against a given \PT budget. 
We first show existence of thresholds. 
In DAGs, we show that threshold budgets can be approximated with error bounds by thresholds under continuous-bidding and that they exhibit a periodic behavior. We identify closed-form solutions in special cases. We implement and experiment with an algorithm to find threshold budgets.
\end{abstract}
\end{frontmatter}

\section{Introduction}
Two-player zero-sum {\em graph games} are a fundamental model with numerous applications, e.g., in reactive synthesis~\cite{PR89} and multi-agent systems~\cite{AHK02}. 
A graph game is played on a finite directed graph as follows. A token is placed on a vertex, and the players move it throughout the graph. We consider {\em reachability} games in which \PO wins iff the token visits a designated target vertex. Traditional graph games are {\em turn-based}: the players alternate turns in moving the token. We consider {\em bidding games}~\cite{LLPU96,LLPSU99} in which an ``auction'' (bidding) determines which player moves the token in each turn. 

Several concrete bidding mechanisms have been defined.
In all mechanisms, both players have bounded budgets.
In each turn, both players simultaneously submit bids that do not exceed their budgets, and the higher bidder moves the token.
The mechanisms differ in three orthogonal properties. {\em Who pays:} In {\em first-price} bidding only the winner pays the bid, whereas in {\em all-pay} bidding both players pay their bids. {\em Who is the recipient:} In {\em Richman} bidding (named after David Richman) payments are made to the other player, 
 in {\em poorman} bidding payments are made to the ``bank'', i.e.\ the bid is lost. 
{\em Restrictions on bids:} In {\em continuous-bidding} no restrictions are imposed and bids can be arbitrarily small, whereas in {\em discrete-bidding} budgets and bids are restricted to be integers.

In this work, we study, for the first time, first-price poorman discrete-bidding games. 
This combination addresses two limitations of previously-studied models. First, most work on bidding games focused on continuous-bidding games, where a rich mathematical structure was identified in the form of an intriguing equivalence with a class of stochastic games called {\em random-turn games}~\cite{PSSW09}, in particular for infinite-duration games~\cite{AHC19,AHI18,AHZ21,AJZ21}. These results, however, rely on bidding strategies that prescribe arbitrarily small bids. Employing such strategies in practice is questionable -- after all, money is discrete.
Second, discrete-bidding games have only been studied under Richman bidding~\cite{DP10,AAH21,AS22}.
The advantage of Richman over poorman bidding is that, as a rule of thumb, the former is technically more accessible. In terms of modeling capabilities, however, while Richman bidding is confined to so called {\em scrip systems} that provide fairness using an internal currency, poorman bidding captures a wide range of settings since it coincides with the popular first-price auction.

The central quantity that we focus on is the {\em threshold budget} in a vertex, which is a necessary and sufficient budget for \PO to ensure winning the game. Formally, a {\em configuration} of a bidding game is a triple $\zug{v, B_1, B_2}$, where $v$ denotes the vertex on which the token is placed and $B_i$ is \PLi's budget, for $i \in \set{1,2}$. For an initial vertex $v$, we call a function $T_v: \Nat \rightarrow \Nat$ the threshold budgets at $v$ if for every configuration $c = \zug{v, B_1, B_2}$, \PO wins from $c$ if $B_1 \geq T_v(B_2)$ and loses from $c$ if $B_1 \leq T_v(B_2)-1$. We stress that we focus only on {\em pure} strategies.

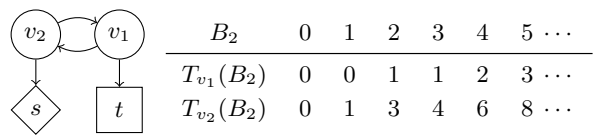
\begin{figure}[t]
	\centering
	\begin{minipage}{0.25\columnwidth}
		\centering
		\begin{tikzpicture}[auto,xscale=1.1]
			\draw node[sink] at (1,-1) (0) {$s$};
			\draw node[state] at (1,0) (1) {$v_2$};
			\draw node[state] at (2,0) (2) {$v_1$};
			\draw node[target] at (2,-1) (3) {$t$};
			
			\path[->]
				(1) edge (0)
				(1) edge[bend left] (2)
				(2) edge[bend left] (1)
				(2) edge (3)
			;
		\end{tikzpicture}
	\end{minipage} %
	\centering
	\begin{minipage}{0.7\columnwidth}
		\begin{tabular}{ccccccc}
			    $B_2$      & $0$ & $1$ & $2$ & $3$ & $4$ & $5\ \cdots$  \\
			\midrule
			$T_{v_1}(B_2)$ & $0$ & $0$ & $1$ & $1$ & $2$ & $3\ \cdots$ \\[2pt]
			$T_{v_2}(B_2)$ & $0$ & $1$ & $3$ & $4$ & $6$ & $8\ \cdots$ 
		\end{tabular}
	\end{minipage}
	\caption{\PO's threshold budget as a function of \PT's initial budget in the two intermediate vertices of the game on the left.}
\label{fig:TOW}
\end{figure}

\begin{example}\label{ex:TOW}
\normalfont
Consider the game that is depicted in Fig.~\ref{fig:TOW}, where we break bidding ties in favor of \PO.
In this example, we identify the first few thresholds. In \cref{thm:tug2}, we show that the thresholds in this game are $T_{v_1}(B_2) = \lfloor B_2/\phi \rfloor$ and $T_{v_2}(B_2) = \lfloor B_2 \cdot \phi \rfloor$, 
where $\phi \approx 1.618$ is the golden ratio.\footnote{We encourage the reader to read more about these two sequences in \url{https://oeis.org/A000201} and \url{https://oeis.org/A005206}. See also Remark~\ref{rem:Wythoff}.}
First, when both budgets are $0$, all biddings result in ties, which \PO wins and forces the game to $t$.
Second, we argue that \PO wins from $\zug{v_1, 0, 1}$. Indeed, \PO bids $0$. In order to avoid losing, \PT must bid $1$, wining the bidding and pays the bid to the bank. Thus, the next configuration is $\zug{v_2, 0,0}$, from which \PO wins.
Third, we show that $T_{v_2}(1) = 1$. Indeed, \PT wins from $\zug{v_2,0,1}$ by bidding $1$. On the other hand, from $\zug{v_2, 1, 1}$ \PO wins since by bidding $1$, he forces the game to $\zug{v_1,0,1}$, from which he wins.
Finally, $T_{v_1}(2) > 0$ since \PT can force two consecutive wins when the budgets are $\zug{0,2}$, and $T_{v_1}(2) = 1$ since by bidding $1$, \PO forces \PT to pay at least $2$ in order not to lose immediately, and he wins from $\zug{v_2,1,0}$. 
\end{example}

\paragraph{Applications.} 
In {\em sequential first-price auctions} $m$ items are sold sequentially in independent first-price auctions (e.g.,~\cite{LST12b,GS01}). The popularity of these auctions stems from their simplicity. Indeed, in each round of the auction, a user is only asked to bid for the current item on sale, whereas in {\em combinatorial auctions}, users need to provide an exponential input: a valuation for each subset of items.
Two-player sequential auctions are a special case of bidding games played on DAGs. Each vertex $v$ represents an auction for an item. A path from the root to $v$ represents the outcomes of previous rounds, i.e., a subset of items that \PO has purchased so far.  For a target bundle $T$ of items, this modeling allows us to obtain a bidding strategy that is guaranteed to purchase at least the bundle $T$ no matter how the opponent bids. Indeed, we solve the corresponding bidding game with the \PO objective of reaching a vertex in which $T$ is purchased.  We can also capture a quantitative setting in which \PO associates a value with each bundle of items. Given a target value $k$, we set \PO's target to be vertices that represent a purchased bundle of value at least $k$. We can then either find the threshold budget for obtaining value $k$ or fix the initial budgets and optimize over $k$.


Next, we describe two important classes of continuous poorman-bidding games that are technically challenging, and we argue that it is appealing to bypass this challenge by considering their discrete-bidding variants. Our study lays the basis for these extensions. 
First, all-pay poorman bidding games constitute a dynamic version of the well-known Colonel Blotto games \cite{Bor21}: we think of budgets as resources with no inherent value (e.g., time or energy) and a strategy invests the resources in order to achieve a goal. 
In fact, many applications of Colonel Blotto games are dynamic, thus all-pay bidding games are arguably a more accurate model~\cite{AIT20}. 
All-pay poorman bidding games are surprisingly technically complex, e.g., already in extremely simple games, optimal strategies rely on infinite-support distributions, and have never been studied under discrete bidding.
Second, the study of partial-observation bidding games was initiated recently~\cite{AJZ23}. Poorman bidding is both appealing from a theoretical and practical standpoint but is technically complex. Again, it is appealing to consider partial-information in combination with discrete bidding.

Finally, poorman discrete bidding are amenable to extensions such as multi-player games or non-zero-sum games~\cite{MKT18}.

\subsubsection*{Our Contribution}

\paragraph{Existence of thresholds.} 
In discrete-bidding games, one needs to explicitly state how bidding ties are resolved~\cite{AAH21}. Throughout the paper, we always break ties in favor of \PO. 
We start by showing existence of thresholds in every game, including games that are not DAGs. Our techniques are adapted from~\cite{AAH21} for Richman discrete-bidding games. We note that existence of thresholds coincides with {\em determinacy}: from every configuration, one of the players has a {\em pure} winning strategy. We point out that while determinacy holds in turn-based games for a wide range of objectives~\cite{Mar75}, determinacy of bidding games is not immediate due to the {\em concurrent} choice of bids. For example, {\em matching pennies} is a very simple concurrent game that is not determined: neither player can ensure winning.

\stam{OLD
The basic question we study is the existence of thresholds, which is equivalent to {\em determinacy}: from every configuration, one of the players has a {\em pure} winning strategy. Determinacy  in Richman discrete-bidding games was studied in~\cite{AAH21}, which shows that determinacy depends on the mechanism employed to resolve bidding ties. We point out that while determinacy holds in turn-based games for a wide range of objectives~\cite{Mar75}, determinacy of bidding games is not immediate since they constitute a subclass of {\em concurrent} graph games, which are not in general determined:
For example, neither player can ensure winning {\em matching pennies} (with pure strategies).

\stam{
We focus on two tie-breaking mechanisms. First, we consider {\em advantage-based} tie breaking, which was defined in~\cite{DP10} under Richman discrete-bidding, where determinacy was shown for reachability objectives. For infinite-duration objectives, determinacy was obtained in~\cite{AAH21}, and improved algorithms were shown in~\cite{AS22}. Given these positive results, we find it surprising that poorman discrete-bidding games under advantage based tie-breaking are not determined! That is, we show a game and an initial configuration from which neither player has a pure winning strategy. 
}

The mechanism that we consider, and on which we focus throughout the rest of the paper, breaks ties in favor of \PO. We show that this mechanism admits determinacy, as is the case in Richman bidding~\cite{AAH21}. 
}

\paragraph{Threshold budgets in DAGs.}
In continuous bidding, each vertex $v$ is associated with a {\em threshold ratio} which is a value $t \geq 0$ such that when the ratio between the two players' budgets is $t+\epsilon$, \PO wins, and when the ratio is $t-\epsilon$, \PT wins~\cite{LLPSU99}. 

First, we bound the discrete thresholds based on continuous ratios as follows. Let $t_v$ denote the continuous ratio at a vertex~$v$. Then, for every $B_2 \in \Nat$, we show that $T_v(B_2)$ lies in the {\em pipe}: $(B_2 - n) \cdot t_v \leq T_v(B_2) \leq B_2 \cdot t_v$, where $n$ is the number of vertices in the game.
We point out that the width of the pipe is fixed, so for large budgets $B_2$ the value $T_v(B_2)/B_2$ tends to the threshold ratio $t_v$.

Second, we show that threshold budgets in DAGs exhibit a periodic behavior. While we view this as a positive result, it has a negative angle: The periods are surprisingly complex even for fairly simple games, so even though we identify a compact representation for the thresholds in Example~\ref{ex:TOW}, we do not expect a compact representation in general games. 

Third, in continuous-bidding games, the compact representation of the thresholds (i.e., each vertex being associated with a ratio) is the key to obtaining a linear-time backwards-inductive algorithm to compute thresholds in DAGs. Under discrete bidding, given a \PT budget $B_2$, we present a pseudo-linear algorithm to find $T(B_2)$, namely its running time is linear in the size of the game and in $B_2$.

Fourth, we obtain closed-form solutions for a class of games called {\em race games}: for $a,b \in \Nat$, the race game $\race{a,b}$ ends within $a+b$ turns, \PO wins the game if he wins $a$ biddings before \PT wins $b$ biddings. For example, a ``best of 7'' tournament (as in the NBA playoffs) is $\race{4,4}$.

\stam{
{\bf Periodicity of the threshold function.} We show that every vertex $v$ in a game played on a DAG, there are values $u_x,u_y \in \Nat$ such that for a large enough budget $B_2$, we have $T(B_2 + u_x) = T(B_2) + u_y$. This interesting theoretical finding has a useful practical application: in order to compute $T(B_2)$, one needs to compute the thresholds for small initial budgets and extrapolate to larger initial budgets. (TODO: check and correct this. Do we need to add assumptions like different continuous thresholds?)

{\bf Closed-form solutions.} We identify closed-form solutions to some games. First, beyond the solution we describe in Example~\ref{ex:TOW}, we show a closed-form solution for a {\em tug-of-war} game (a game played on an undirected path) with three internal vertices, and show experimental evidence that closed-form solutions are unlikely for games beyond that. Second, for $a,b \in \Nat$, the {\em race game} $\race{a,b}$ (e.g.,~\cite{}) ends within $a+b$ turns, \PO wins the game if he wins $a$ biddings before \PT wins $b$ biddings. For example, a ``best out of 7'' tournament (as in the NBA playoffs) is $\race{4,4}$. We show that the threshold in $\race{a,b}$ is $....$.  
}
\begin{figure}[t]
\centering
\begin{tikzpicture}[auto]
		\draw (-4.225,0) node[white] {};

		\draw (0.3,0) node[rectangle,draw,align=center,minimum height=0.6cm] (0) {\small \textcolor{Dark2-C}{root}};
		\draw (-2,0) node[rectangle,draw,align=center,minimum height=0.6cm] (1) {\small {\textcolor{Dark2-A}{\(v_1\)}: $\race{4,5}$}};
		\draw (2.6,0) node[rectangle,draw,align=center,minimum height=0.6cm] (2) {\small {\textcolor{Dark2-B}{\(v_2\)}: $\race{3,5}$}};
		
		\draw (0) edge[->] (1);
		\draw (0) edge[->] (2);
\end{tikzpicture}%

\vspace{2ex}

\begin{tikzpicture}[auto]
	\begin{axis}[
			width=0.95\columnwidth,height=5cm,
			xmin=85.5,xmax=144.5,
			xlabel={\PT's budget $B_2$},ylabel={Threshold budget $\budget{B_2}{v}$},
			legend pos=north west,
			legend style = {mark options={scale=2}}
		]
		\addplot[mark=none, thin, lightgray, forget plot] coordinates {(0,0) (1000,800)};
		\addplot[mark=none, thin, lightgray, forget plot] coordinates {(0,-3.2) (1000,796.8)};

		\addplot[mark=none, thin, lightgray, forget plot] coordinates {(0,0) (1000,600)};
		\addplot[mark=none, thin, lightgray, forget plot] coordinates {(0,-2.4) (1000,597.6)};

		\addplot[mark=none, thin, lightgray, forget plot] coordinates {(0,0) (1000,711.111)};
		\addplot[mark=none, thin, lightgray, forget plot] coordinates {(0,-3.555) (1000,707.555)};

		\pgfplotsset{cycle list/Dark2-3}

		\addplot+[mark size=1pt,only marks,mark=star] table [x index=0, y index=2, col sep=comma] {data/fig2_race_choice.csv};
		\addlegendentry{$(4, 5)$}

		\addplot+[mark size=1pt,only marks,mark=diamond*] table [x index=0, y index=1, col sep=comma] {data/fig2_race_choice.csv};
		\addlegendentry{$(3, 5)$}

		\addplot+[mark size=1.75pt,only marks,mark=x] table [x index=0, y index=3, col sep=comma] {data/fig2_race_choice.csv};
		\addlegendentry{root}

		\addplot[mark=o,only marks,mark size=2.5pt,line width=1.25pt] coordinates {(44,28) (89,60) (134,92)};
	\end{axis}
\end{tikzpicture}

\caption{
	The thresholds in three vertices: a root vertex whose two children are roots of race games $\race{3,5}$ and $\race{4,5}$.
	For visibility, the x-axis starts at 85.
	We also depict the lower and upper bounds we obtain from our \emph{pipe theorem} (indicated by solid lines) and highlight two points indicating the periodicity in the root vertex.
}
\label{fig:intro-race}
\end{figure}
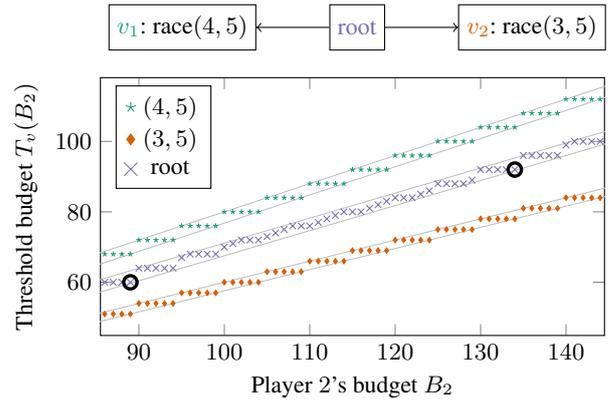

\begin{example}
\label{ex:2race+1}
\normalfont
We illustrate some of our main results. In Fig.~\ref{fig:intro-race}, we depict the threshold budgets in three vertices of a game 
as a function of \PT's budget. 
First, the discrete thresholds reside in a ``pipe'' 
with slope equal to the corresponding continuous ratio (\cref{stm:pipe}).  Second, $v_1$ and $v_2$ are roots of race games, thus their thresholds are simple step functions (\cref{thm:thresholdrace}).
Moreover, they lie exactly on the boundary of the pipe infinitely often, i.e.\ the pipe bound is tight (\cref{stm:pipe_tight}).
Third, the threshold budgets are periodic (\cref{thm:periodicity}), we have $T_r(B_2 + 45) = T_r(B_2) + 32$. We find it surprising that in such a simple game both the periodicity in the root node and the irregularity within this period are comparatively large.
\end{example}
\paragraph{Implementation and Experiments.} 
We provide a pseudo-polynomial algorithm to find the threshold budget given the initial budget of \PT in general games together with a specialized, faster variant for DAGs. We implement the algorithm, experiment with it, and develop conjectures based on our findings. Beyond the theoretical interest, the running time we observed is extremely fast, illustrating the practicality of finding exact thresholds. 




\section{Preliminaries}
	A reachability bidding game is \(\G  = \zug{V, E, t, s}\), where \(V\) is the set of vertices, \(E \subseteq V \times V\) is the set of edges, \PO's {\em target} is $t \in V$, a {\em sink} $s \in V$ has no path to $t$ and we think of $s$ as \PT's target, we assume that all other vertices have a path to both $t$ and $s$.
	We write $N(v) = \{u \mid (v, u) \in E\}$ to denote the \emph{neighbours} of $v$.
	
	A {\em configuration} of \(\calG\) is of the form \(c = \zug{v, B_1, B_2}\), where \(v \in V\) is the vertex on which the token is placed and $B_i$ is the budget of \PLi, for $i \in \set{1, 2}$. At $c$, both players simultaneously choose {\em actions}, and the pair of actions determines the next configuration. For $i \in \set{1,2}$, \PLi's action is a pair $\zug{b_i, u_i}$, where $b_i \leq B_i$ is an {\em integer} bid that does not exceed the available budget and $u_i \in N(v)$ is a neighbor of $v$ to move to upon winning the bidding. If $b_1 \geq b_2$, then \PO moves the token and pays ``the bank'', thus the next configuration is $\zug{u_1, B_1 - b_1, B_2}$. Dually, when $b_2 > b_1$, the next configuration is $\zug{u_2, B_1, B_2 - b_2}$. 
%
	
	A {\em strategy} is a function that maps each configuration to an action.\footnote{In full generality, strategies map {\em histories} of configurations to actions. However, {\em positional} strategies suffice for reachability games.} A pair of strategies $\sigma_1$, $\sigma_2$, and an initial configuration $c_0$ gives rise to a unique {\em play} denoted by $\play(c_0, \sigma_1, \sigma_2)$, which is defined inductively. The inductive step, namely the definition of how a configuration is updated given two actions from the strategies, is described above. Let $\play(c_0, \sigma_1, \sigma_2) = c_0, c_1, \ldots$, where $c_i = \zug{v_i, B^i_1, B^i_2}$. The {\em path} that corresponds to $\play(c_0, \sigma_1, \sigma_2)$ is $v_0,v_1,\ldots$

\begin{definition}[Winning Strategies]
A \PO strategy \(\sigma_1\) is called a \emph{winning strategy} from configuration \(c_0\) iff  for any \PT strategy \(\sigma_2\),  \(\play(c_0, \sigma_1, \sigma_2)\) visits the target $t$. 
On the other hand, a \PT strategy \(\sigma_2\) is a winning strategy from \(c_0\) iff for any \PO strategy \(\sigma_1\), \(\play(c_0, \sigma_1, \sigma_2)\) \emph{does not visit} the target \(t\).
For $i \in \set{1, 2}$, we say that \PLi \ {\em wins} from $c_0$ if he has a winning strategy from $c_0$. 
\end{definition}

Throughout the paper, we focus on the necessary and sufficient budget that \PO needs for winning, given a \PT budget, defined formally as follows. 

\begin{definition}[Threshold budgets]
Consider a vertex $v \in V$. The {\em threshold budget} at $v$ is a function $T_v: \Nat \rightarrow \Nat$ such that for every $B_2 \in \Nat$:
\begin{itemize}
\item \PO wins from $\zug{v, T_v(B_2), B_2}$, and 
\item \PT wins from $\zug{v, T_v(B_2) -1, B_2}$. 
\end{itemize}
\end{definition}

\stam{ 
	A discrete-bidding games \(\calG\) is \(\zug{V, E, B_1, B_2, \calO}\), where \(V\) is the set of vertices, \(E \subseteq V \times V\) is the set of edges, and \(B_1, B_2 \in \bbN^*\) are the two players' initial budget respectively, and \(\calO\) denotes the \PO's objective in the game.
	
	Intuitively, on each turn, both players simultaneously choose a bid which does not exceed their respective budgets. 
	The higher bidder moves the token.
	We study \emph{Poorman} bidding, where the bidder(s) pays the \emph{bank}.
	In \emph{first-price} Poorman bidding, only the higher bidder pays the bank, while in case of \emph{all-pay} Poorman bidding \emph{both} players pays their respective bids to the bank.
	
	We consider two types of tie-breaking mechanism: 
	\begin{itemize}
		\item \textbf{One player wins ties.} One of the player always wins ties. If the objectives are not symmetric, this may raise two different interesting scenarios to investigate.
		
		\item \textbf{Advantage based.} Exactly one of the players holds the advantage at any turn. 
		When a tie occurs, the player who holds the advantage has two choices: \emph{either} to use it, which makes him the winner of the bid and the advantage passed on to the other player,  \emph{or} to hold on to it, which makes the other player the winner of the current bid. 
	\end{itemize}

	For the subsequent sections, we consider the \emph{\PO wins ties} as the default tie-breaking mechanism, unless otherwise mentioned.
	
	A configuration in \(\calG\) is of the form \(\zug{v, k, l}\) where \(v\) is the current location in the game graph, and \(k, l\) are the budgets for \PO and \PT respectively.
	From a configuration \(c  = \zug{v, k, l}\), if \PO bids \(b_1 \leq k\) and \PT bids \(b_2 \leq l\), then 
	\begin{itemize}
		\item  For first-price bidding, if \(b_1 \geq b_2\) and suppose \PO chooses \(v'\), then the next configuration is \(\zug{v', k - b_1, l}\).
		Otherwise suppose \PT chooses \(v''\) upon winning the bidding, then the next configuration in the game is \(\zug{v'', k, l - b_2}\).
		
		\item For all play bidding, if \(b_1 \geq b_2\) and suppose \PO chooses \(v'\), then the next configuration is \(\zug{v', k - b_1, l - b_2}\).
		Otherwise, suppose \PT choose \(v''\) upon winning the bidding, then the next configuration in the game is \(\zug{v'', k - b_1, l - b_2}\). 
	\end{itemize}
	A (finite) history of discrete bidding games is a sequence of configurations \(\zug{v_1, k_1, l_1}, \zug{v_2, k_2, l_2}, \ldots \zug{v_n, k_n, l_n}\), where the subsequent configuration of \(\zug{v_{i}, k_i, l_i}\) is determined depending on each player's bid as above.
	For the current purpose, we consider reachability objective for \PO, denoted by \(\calO \subseteq V\), and we call a play \(\pi = \zug{v_j, k_j, l_j}_{j \geq 0}\) \emph{satisfies} \(\calO\) iff there is an \(j \geq 0\) such that \(v_j \in \calO\).
	
	A strategy for $\PLi$ for \(i \in \set{1, 2}\) is function from the set of legal histories to a choice of bid and next vertex, of the form \(\sigma: (V \times \bbN \times \bbN)^* \rightarrow \bbN \times V\).
	It is imperative that a player can choose their bid only within their current budget.
	Once we fix strategies \(\sigma_i\) for \(i \in \{1, 2\}\) and an initial configuration \(\zug{v_0, k_0, l_0}\), we get an unique play.

\begin{definition}[Surely Winning Strategies]
		A \PO strategy \(\sigma_1\) is called a \emph{surely winning strategy} from configuration \(\zug{v, k, l}\) iff  for any \PT strategy \(\sigma_2\), the unique play \(\zug{\zug{v, k, l}, \sigma_1, \sigma_2}\) satisfies \(\calO\).
\end{definition}
\todo{Semantics of bidding games as concurrent games}
}

\section{Existence of Thresholds}\label{sec:determinacy}
In this section we show the existence of threshold budgets in games played on general graphs. 
\begin{definition}{\bf (Determinacy).} 
A game is {\em determined} if from every configuration, one of the players has a pure winning strategy.
\end{definition}

We claim that determinacy is equivalent to existence of thresholds. It is not hard to deduce both implications from the following observation. An additional budget cannot harm a player; namely, if \PO wins from a configuration $\zug{v, B_1, B_2}$, he also wins from $\zug{v, B'_1, B_2}$, for $B'_1 > B_1$, and dually for \PT. 

In the rest of this section, we prove determinacy of poorman discrete-bidding games. Our proof is based on a technique that was developed in \cite{AAH21} to show determinacy of \emph{Richman} discrete-bidding. We illustrate the key ideas. 
Consider a reachability bidding game \(\G = \zug{V, E, t, s}\) 
and a configuration \(c = \zug{v, B_1, B_2}\). 
We define a {\em bidding matrix} $M_c$ that corresponds to $c$. For $\zug{b_1, b_2} \in \set{0,\ldots, B_1} \times \set{0,\ldots, B_2}$, the \((b_1, b_2)^{\text{th}}\) entry in $M_c$ is associated with \PLi bidding $b_i$, for $i \in \set{1,2}$. We label entries in $M_c$ by $1$ or $2$ as follows. 
Let $\G_1$ denote a turn-based game that is the same as $\G$ only that in each turn, \PO reveals his bid first and \PT responds. Technically, once both players reveal their bids, the game proceeds to an intermediate vertex $i_{b_1, b_2} = \zug{b_1, b_2, c}$. Since $G_1$ is turn-based, it is determined, thus one of the players has a winning strategy from $i_{b_1, b_2}$. We label the \((b_1, b_2)^{\text{th}}\) entry in $M_c$ by $i \in \set{1,2}$ iff \PLi wins from $i_{b_1, b_2}$. For \(i \in \{1, 2\}\), we call a row or a column of \(M_c\) a \(i\)\emph{-row} or \(i\)\emph{-column}, respectively, if all its entries are labeled \(i\).


\begin{definition}{\bf (Local Determinacy)}
	A bidding game $\G$ is called \emph{locally determined} if for every configuration $c$, the bidding matrix $M_c$ either has a $1$-row or a $2$-column. 
\end{definition}

Local determinacy is used as follows. 
It can be shown that if \PO wins from $c$, then $M_c$ has a $1$-row. 
More importantly, suppose that \PO does not win in $c$, then local determinacy implies that there is a $2$-column, say $b_2$. This means that when \PT bids $b_2$ in $\G$, the game proceeds to a configuration $c'$ from which \PO does not win. 
In reachability games, since \PT's goal is to avoid the target, traversing non-losing configurations for \PT is in fact winning.\footnote{The theorem is stated for reachability objectives and it is extended in~\cite{AAH21} to richer objectives.}

\begin{lemma}{\bf (\cite[Theorem 3.5]{AAH21})}\label{lem:locallydetermined}
	If a reachability bidding game \(\G\) is locally determined, then \(\G\) is determined.
\end{lemma}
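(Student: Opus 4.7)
The plan is to show, for every configuration $c$, that either $c$ belongs to \PO's winning region $W_1$ in $\G$, or \PT has a pure positional winning strategy from $c$. Let $R$ denote the complement of $W_1$ in the configuration space. Since the target $t$ is trivially in $W_1$, it suffices to construct a \PT strategy that keeps the play inside $R$ forever, which automatically avoids $t$ and thus wins the reachability game.

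The main technical tool is a \emph{monotonicity principle} between $\G$ and $\G_1$: any pure winning strategy for either player in $\G$ remains winning when transplanted into $\G_1$. For \PT this is immediate because in $\G_1$ she enjoys strictly more information. For \PO, given a winning strategy $\sigma$ in $\G$ and any \PT strategy $\tau'$ in $\G_1$, one defines $\tau$ in $\G$ that mimics $\tau'$ against \PO's predictable bid from $\sigma$; the play of $\sigma$ against $\tau$ in $\G$ coincides with the play of $\sigma$ (viewed in $\G_1$) against $\tau'$, and the former reaches $t$ by assumption. Combined with the turn-based determinacy of $\G_1$, this yields the useful converse: if \PT wins in $\G_1$ from a configuration $c'$, then \PO cannot win in $\G$ from $c'$, so $c' \in R$.

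Using this, I would first show that $c \in W_1$ implies $M_c$ has a $1$-row: taking $b_1^*$ to be the bid of \PO's winning strategy at $c$, for every \PT response $b_2$ the ensuing configuration is in $W_1$, hence \PO wins in $\G_1$ from $i_{b_1^*, b_2}$, so row $b_1^*$ is labeled entirely $1$. Contrapositively, if $c \in R$ then $M_c$ has no $1$-row, and local determinacy supplies a $2$-column $b_2^*$. I define \PT's positional action at $c$ as $(b_2^*, u_2^*)$, where $u_2^*$ is any neighbor chosen by \PT's $\G_1$-winning strategy at some $i_{b_1, b_2^*}$ with $b_1 < b_2^*$. To verify the invariant, I consider two cases in $\G$: if \PO wins the bidding with any action $(b_1, u_1)$, the $2$-column property guarantees that \PT wins in $\G_1$ from $\zug{u_1, B_1 - b_1, B_2}$ for every neighbor $u_1$, which by the converse monotonicity lies in $R$; if \PT wins the bidding, the unique successor $\zug{u_2^*, B_1, B_2 - b_2^*}$ is by construction \PT-winning in $\G_1$ and thus in $R$.

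The subtle point I expect to be the main obstacle is that in $\G_1$ \PT's next-vertex choice may legitimately depend on the revealed $b_1$, whereas in $\G$ she must commit to a single $u_2^*$. This dissolves in first-price poorman bidding because only the bid-winner pays: whenever \PT wins, the successor configuration depends only on $u_2^*$ and $b_2^*$, not on $b_1$, so any single $u_2^*$ certified by the $2$-column works uniformly, and the invariant argument closes.
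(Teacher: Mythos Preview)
Your proposal is correct and follows essentially the same route as the paper's sketch (the lemma itself is cited from~\cite{AAH21}, and the paper only outlines the argument): show that $c\in W_1$ forces a $1$-row, use local determinacy on the complement to extract a $2$-column, and let \PT play that column to remain forever in non-\PO-winning configurations. Your treatment is more detailed---in particular the explicit monotonicity principle relating $\G$ and $\G_1$, and the observation that in first-price bidding the post-bid configuration when \PT wins is independent of $b_1$, so a single $u_2^*$ suffices---but these are exactly the points the paper glosses over, so the approaches coincide.
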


Local determinacy of poorman discrete-bidding games follows from the following observations on bidding matrices.  
\begin{restatable}{lemma}{observations}
	\label{lem:observations}
	Consider a poorman discrete-bidding game \(\G\) where \PO always wins tie, and consider a configuration \(c = \langle v, B_1, B_2 \rangle \).			(1) Entries in \(M_c\) in a column above the top-left to bottom-right diagonal are equal: for bids \(b_2 > b_1 > b_1'\), we have $M_c[b_1, b_2] = M_c[b_1', b_2]$. (2) Entries on a row, left of the diagonal are equal: for bids \(b_1 > b_2 > b_2'\), we have $M_c[b_1, b_2] = M_c[b_1, b_2']$. (3) The entry immediately under the diagonal equals the entry on the diagonal: For a bid $b$, we have $M_c[b, b]= M_c[b, b-1]$.
\end{restatable}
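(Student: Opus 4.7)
The plan is to exploit the defining feature of (first-price) poorman bidding: only the winner of the bid pays, and the payment goes to the bank rather than to the opponent. Consequently, the configuration reached from $\langle v, B_1, B_2 \rangle$ after a bid pair $(b_1, b_2)$ depends only on (i)~which player won the bidding, (ii)~the amount that player paid — namely, their own bid — and (iii)~the winner's choice of a successor vertex in $N(v)$. In particular, the \emph{losing} bidder's bid is irrelevant to the resulting configuration. Since ties go to \PO, the bid winner is \PO iff $b_1 \geq b_2$.

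With this observation, each of the three claims reduces to showing that the two bid pairs being compared lead to the same set of successor configurations in the auxiliary turn-based game $\G_1$, so the bid winner in $\G_1$ has exactly the same menu of options — and hence the same winning/losing status — from the respective intermediate vertices $i_{b_1,b_2}$. Concretely, for (1) both pairs $(b_1, b_2)$ and $(b_1', b_2)$ with $b_2 > b_1, b_1'$ have \PT winning the bid and paying $b_2$, giving common successor menu $\{\langle u, B_1, B_2 - b_2\rangle : u \in N(v)\}$. For (2), symmetrically, both pairs $(b_1, b_2)$ and $(b_1, b_2')$ with $b_1 > b_2, b_2'$ have \PO winning and paying $b_1$, with common menu $\{\langle u, B_1 - b_1, B_2\rangle : u \in N(v)\}$. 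For (3), the pair $(b, b)$ has \PO winning by the tie-breaking rule while $(b, b{-}1)$ has \PO winning by strict inequality, but in both cases \PO pays $b$ and the successor menus coincide.

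The only subtlety — not really an obstacle — is keeping in mind that $M_c[b_1, b_2]$ is the winner of $\G_1$ from the intermediate vertex $i_{b_1, b_2}$, which already encompasses the bid winner's forthcoming choice of destination in $\G_1$. Once we have established that two bid pairs induce identical successor sets, determinacy of $\G_1$ as a turn-based game immediately gives equal labels, and no further induction or case analysis is needed beyond the elementary \emph{who wins the bid} dichotomy described above. The contrast with Richman bidding, where the losing bid is transferred to the opponent and therefore does affect the next configuration, is precisely what makes the poorman argument this clean.
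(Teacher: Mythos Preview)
Your argument is correct and essentially identical to the paper's: in each of the three cases you identify the bid winner, note that only the winner's bid affects the next configuration under first-price poorman rules, and conclude that the two intermediate vertices have the same set of successor configurations and hence the same label in $\G_1$. One small correction to your closing aside: in first-price \emph{Richman} bidding the \emph{winning} bid (not the losing bid) is paid to the opponent, so the losing bid is likewise irrelevant there and the same lemma holds; this is why the paper can invoke \cite[Theorem~4.5]{AAH21} verbatim regardless of Richman versus poorman.
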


\begin{proof}
	If \(b_2 > b_1 > b_1'\) then \PT wins the current bidding for both pair of bids \(\zug{b_1, b_2}\) and \(\zug{b_1', b_2}\). 
	Thus \PT controls the corresponding intermediate vertex, and moves the token as per her choice. 
	As a result, only \PT's budget gets decreased by \(b_2\).
	Therefore, all the transitions that are available from \(\zug{c, b_1, b_2}\) are also available from \(\zug{c,b_1', b_2}\), and vice-versa.
	In other words, whoever wins from \(\zug{c, b_1, b_2}\) also wins from \(\zug{c, b_1', b_2}\), hence the entries are same. 
	The argument is similar when \(b_1 > b_2 > b_2'\).
	
	For the third observation, the tie-breaking mechanism, \PO always wins tie, plays the key role.
	For both the cases: when the bids are \(\zug{b, b}\), and when it is \(\zug{b, b -1}\) from a configuration \(c\), \PO wins the current bidding. 
	As a result the token moves according to his choice, his budget gets decreased by \(b\), while \PT's budget remains unchanged. 
	Therefore, all the available transitions from the \PO controlled vertex \(\zug{c, b, b}\) and \(\zug{c, b, b-1}\) are the same, and whoever wins from one, also wins from the other. 
	Thus the entries in \(M_c\) are same.
\end{proof}

%

The proof of \cite[Theorem 4.5]{AAH21} shows that a game whose bidding matrices have the properties of~\cref{lem:observations} is locally determined, irrespective of whether Richman or poorman bidding is employed. Combining with~\cref{lem:locallydetermined}, we obtain the following. 

\begin{restatable}{theorem}{determinacy}\label{thm:determinacy}
	Reachability poorman discrete-bidding games are determined.
\end{restatable}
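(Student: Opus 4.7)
The plan is to reduce the problem entirely to the machinery already assembled: \cref{lem:locallydetermined} converts local determinacy into determinacy, and \cref{lem:observations} supplies the three structural facts about bidding matrices that are exactly what the AAH21 proof of local determinacy needs. So the core task is to argue that any game whose bidding matrices satisfy the three properties of \cref{lem:observations} is locally determined, and then to glue the pieces together.

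First, I would fix an arbitrary configuration \(c = \zug{v, B_1, B_2}\) and examine its bidding matrix \(M_c\). By \cref{lem:observations}, \(M_c\) is highly constrained: above the main diagonal each column is constant in the row index, below the main diagonal each row is constant in the column index, and the diagonal entry \(M_c[b, b]\) agrees with the entry \(M_c[b, b-1]\) immediately to its left. Intuitively, this means that the ``interesting'' information in \(M_c\) lives on a thin band around the diagonal, and once one entry in a constant column or row is labeled, the entire column or row is forced.

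Next, I would argue local determinacy by contradiction. Suppose \(M_c\) has neither a \(1\)-row nor a \(2\)-column. From the absence of a \(1\)-row, every row contains some \(2\); from the absence of a \(2\)-column, every column contains some \(1\). Using the ``constant above the diagonal'' property for columns and the ``constant left of the diagonal'' property for rows, one can propagate these labels and pin down where the conflicts must occur, eventually reaching an impossible configuration around the diagonal (where the third property of \cref{lem:observations} forces an equality that contradicts the previous two). This is precisely the combinatorial argument in \cite[Thm.~4.5]{AAH21}, whose hypotheses mention Richman bidding but whose proof uses only the three structural properties. Since our \cref{lem:observations} establishes exactly those properties in the poorman setting, the same argument applies verbatim, yielding local determinacy.

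Finally, with local determinacy in hand, \cref{lem:locallydetermined} immediately implies determinacy of \(\G\), completing the proof. The main obstacle is conceptual rather than computational: one has to notice that the AAH21 local-determinacy proof is agnostic to the payment recipient and depends only on the three matrix observations, so no new combinatorial work is required once \cref{lem:observations} is in place.
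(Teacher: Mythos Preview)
Your proposal is correct and follows essentially the same approach as the paper: invoke \cref{lem:observations} to obtain the three structural properties of the bidding matrix, observe that the local-determinacy argument of \cite[Thm.~4.5]{AAH21} depends only on these properties (not on the Richman/poorman distinction), and then apply \cref{lem:locallydetermined}. The paper simply states this reduction without spelling out the contradiction argument you sketch, but the logical structure is identical.
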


\section{Threshold Budgets for Games on DAGs}\label{sec:pipeandstabilization}
In this section, we focus on games played on directed acyclic graphs (DAGs). We present two main results:
First, the \textit{Pipe theorem} that relates the threshold budgets to the threshold ratio in the continuous-bidding game; and,
second, the \textit{Periodicity theorem} which shows that the threshold budgets eventually exhibit a periodic behavior.
Throughout this section, let \(\calG = \zug{V, E, t, s}\) be a game with $\zug{V,E}$ a DAG.

\subsection{Relating Discrete and Continuous Thresholds}
We call the following theorem the \textit{Pipe theorem} since it shows that the threshold budgets $T_v(B_2)$ lie in a ``pipe'' below a line whose slope is the threshold ratio $t_v$ (see Example~\ref{ex:2race+1}). We note that threshold ratios can be computed in DAGs in time polynomial in the size of the game (a fact we also exploit later on in our algorithm on DAGs), thus an immediate corollary of the Pipe theorem is an efficient approximation algorithm to computing the threshold budgets. In Corollary~\ref{stm:pipe_tight}, we show that the lower bound is tight. For a vertex $v$, let {\em $\textnormal{max-path}(v)$} denote the length of the longest path from $v$ to either $t$ or $s$. Note that $\textnormal{max-path}(v) \leq |V| - 1$. 

\begin{restatable}[Pipe theorem]{theorem}{pipe} \label{stm:pipe}
	Given $v\in V$, denote by $t_v$ the threshold ratio in the continuous-bidding game at $v$. Then, for every initial budget $B_2\in\mathbb{N}$ of \PT, we have
	\begin{equation*}
		t_v \cdot (1- \textnormal{max-path}(v) / B_2)  \leq T_v(B_2) / B_2 \leq t_v.
	\end{equation*}
	The right-hand side inequality holds even when $\G$ is not a DAG.
\end{restatable}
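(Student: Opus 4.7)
The plan is to lift the optimal strategies from the continuous poorman-bidding game to the discrete game, carefully accounting for the rounding errors introduced by integer-valued bids. Recall that in the continuous game, at a vertex $v'$ with threshold ratio $t_{v'}$, \PO has a positional winning strategy whenever $B_1' \geq t_{v'} \cdot B_2'$: he bids $b^\star = B_1' - t_{v^+}\cdot B_2'$, where $v^+$ is a best neighbor (minimising $t_u$), moving to $v^+$ upon winning and to a worst neighbor $v^-$ otherwise; \PT's winning strategy below the threshold is symmetric.

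For the upper bound $T_v(B_2)/B_2 \leq t_v$, I would show that if \PO's integer budget satisfies $B_1 \geq t_v \cdot B_2$, then he wins the discrete game by mimicking his continuous strategy with the floored bid $\lfloor b^\star\rfloor$. The key point is that when \PO wins the bidding, his residual budget $B_1 - \lfloor b^\star\rfloor \geq B_1 - b^\star = t_{v^+}\cdot B_2$ preserves the continuous invariant at $v^+$. When \PT wins, she must over-bid by at least one unit (since ties break in \PO's favor), so her budget drops by strictly more than $b^\star$, which more than preserves the invariant at $v^-$. Hence the invariant $B_1' \geq t_{v'} \cdot B_2'$ is maintained indefinitely, and the standard continuous-bidding termination argument (exploiting $t_t = 0$ and $t_s = \infty$) forces the play to reach $t$. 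Since this argument never uses acyclicity, the upper bound carries over verbatim to arbitrary graphs.

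For the lower bound $T_v(B_2)/B_2 \geq t_v \cdot (1 - \textnormal{max-path}(v)/B_2)$, I would symmetrically have \PT mimic her continuous winning strategy, but now with the ceiled bid $\lfloor b^\star\rfloor + 1$ to overcome \PO's tie advantage. Each round this costs \PT up to one extra unit of budget compared with the continuous strategy. The DAG assumption bounds the play length by $\textnormal{max-path}(v)$, so the total extra cost is at most $\textnormal{max-path}(v)$. If $B_1 < t_v \cdot (B_2 - \textnormal{max-path}(v))$, this provides \PT with precisely enough slack to absorb the accumulated rounding error while still maintaining the continuous losing condition $B_1' < t_{v'} \cdot B_2'$ throughout the play, forcing the game to the sink $s$.

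The main obstacle is the bookkeeping in the lower bound: one must verify that the per-round rounding loss, measured against the initial continuous invariant, accumulates to at most $t_v \cdot \textnormal{max-path}(v)$ in total, rather than something larger like $(\max_{u} t_u)\cdot\textnormal{max-path}(v)$. This requires a modified potential function that blends \PT's current budget with the number of remaining rounds, carried out by induction on the DAG structure. The DAG assumption is essential here, since in a non-DAG graph plays may be arbitrarily long and the rounding error can accumulate without bound, which is consistent with the theorem only asserting the right-hand inequality in the general case.
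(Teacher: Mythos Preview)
Your approach is essentially the paper's: for the upper bound you floor \PO's continuous optimal bid and maintain the ratio invariant (the paper does the same, working with a strict invariant via an auxiliary $\epsilon>0$, which also cleanly handles the termination/sink issue); for the lower bound you have \PT round her continuous bid up and use the DAG depth to bound the total rounding loss.

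The one place where you over-complicate is the lower bound. You worry that the per-step rounding error, pushed through the invariant $B_1' < t_{v'} B_2'$, might accumulate to $(\max_u t_u)\cdot\textnormal{max-path}(v)$ rather than $t_v\cdot\textnormal{max-path}(v)$, and you propose a blended potential and a DAG induction to control this. The paper avoids this entirely with a simpler device: instead of tracking the invariant against \PT's \emph{actual} budget, it runs \PT's continuous winning strategy as if her initial budget were only $B_2-\textnormal{max-path}(v)$. Against $B_1 < t_v\cdot(B_2-\textnormal{max-path}(v))$ such a strategy exists by definition of $t_v$. In the discrete game \PT then has $\textnormal{max-path}(v)$ spare units over what the continuous strategy assumes, exactly one per step, which is all that is needed to afford the rounded-up bid $\lceil b\rceil$. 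Since the discrete bids dominate the continuous ones and the game lasts at most $\textnormal{max-path}(v)$ steps, legality of the bids is the only thing to verify, and winning follows directly. No modified potential or structural induction is required.
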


\begin{proof}
	{\em Right-hand-side inequality.} We first prove the right-hand-side inequality. To prove that $T_v(B_2) / B_2 \leq t_v$, it suffices to prove that, for each $\epsilon > 0$, \PO has a winning strategy if the game starts in $v$, \PO's initial budget is at least $t_v\cdot B_2 + \epsilon$ and \PT's initial budget is $B_2$. If we are able to prove this claim, it will then follow that $T_v(B_2) / B_2 \leq t_v + \epsilon$ holds for every $\epsilon>0$, therefore $T_v(B_2) / B_2 \leq t_v$.
	
	Fix $\epsilon > 0$. We construct the winning strategy of \PO as follows. By the definition of the continuous threshold $t_v$, we know that \PO has a winning strategy in the poorman {\em continuous-bidding} game. Moreover, it was shown in~\cite[Theorem 7]{LLPSU99} that \PO has a {\em memoryless} winning strategy, i.e.~a strategy in which the bids and token moves in each turn depend only on the position of the token and the players' budgets. We take such strategy $\sigma_{\textrm{cont}}$. We then construct a winning strategy $\sigma_{\textrm{disc}}$ of \PO in the poorman discrete-bidding game as follows:
	\begin{itemize}
		\item At each turn, if \PO under $\sigma_{\textrm{cont}}$ would bid $b$, then \PO under $\sigma_{\textrm{disc}}$ bids $\lfloor b \rfloor$.
		\item If \PO wins the bidding, then the token is moved to the vertex perscribed by $\sigma_{\textrm{cont}}$. 
	\end{itemize}
	
	We show that $\sigma_{\text{disc}}$ is indeed winning for \PO. To do this, we prove that $\sigma_{\textrm{disc}}$ preserves the invariant that, whenever the token is in some vertex $v'$, the ratio of players' budgets is positive and strictly greater than $t_{v'}$. This invariant implies that the token does not reach the sink state as the continuous threshold in the sink state is infinite. Thus, as a poorman discrete-bidding game ends in finitely many steps, this then implies that the game must eventually reach \PO's target state and therefore that $\sigma_{\text{disc}}$ is winning for \PO.
	
	We prove the invariant by the induction on the length of the game play. The base case holds by the assumption that, in the initial vertex $v$, \PO's initial budget is at least $t_v\cdot B_2 + \epsilon$ and \PT's initial budget is $B_2$. Now, for the induction hypothesis, suppose that the token is in vertex $v'$ after finitely many steps, with \PT's budget $B_2'$ and \PO's budget at least $t_{v'} \cdot B_2' + \epsilon'$ for some $\epsilon'>0$. We show that the invariant is preserved in the next step. Suppose that \PO under $\sigma_{\textrm{disc}}$ bids $\lfloor b \rfloor$ where $b$ is the bid of \PO under $\sigma_{\textrm{cont}}$. In what follows, we use the fact that $\sigma_{\textrm{cont}}$ preserves the ratio invariant which was established in the proof of~\cite[Theorem 7]{LLPSU99}. We distinguish between two cases:
	\begin{itemize}
		\item If \PO wins the bidding and moves the token to $v''$, then the ratio of budgets at the next step is
		\begin{equation*}
			\begin{split}
				\frac{t_{v'} \cdot B_2' + \epsilon' - \lfloor b\rfloor}{B_2'} &\geq \frac{t_{v'} \cdot B_2' + \epsilon' - b}{B_2'} \\
				&> \frac{t_{v'} \cdot B_2' - b}{B_2'} \\
				&\geq t_{v''}.
			\end{split}
		\end{equation*}
		The last inequality follows from the fact that the fraction in the second line is the subsequent ratio of budgets under the continuous-bidding winning strategy.
		\item If \PT wins the bidding, then \PT had to bid at least $\lfloor b \rfloor + 1$. Suppose that \PT moves the token to $v''$ upon winning. Then the ratio of budgets at the next step is at least
		\begin{equation*}
			\begin{split}
				\frac{t_{v'} \cdot B_2' + \epsilon'}{B_2' - \lfloor b\rfloor - 1 } &\geq \frac{t_{v'} \cdot B_2' + \epsilon'}{B_2' - b} \\
				&> \frac{t_{v'} \cdot B_2'}{B_2' - b} \\
				&\geq t_{v''}.
			\end{split}
		\end{equation*}
		The first inequality follow by observing that $\lfloor b \rfloor + 1 \geq  b$, and the third inequality follows from the fact that the second fraction is an upper bound on the subsequent ratio of budgets under the continuous-bidding winning strategy.
	\end{itemize}
	Hence, the invariant claim for poorman discrete-bidding follows by induction on the length of the game play, thus $\sigma_{\text{disc}}$ is indeed winning for \PO and the right-hand-side inequality in the theorem follows.
	
	\noindent {\em Left-hand-side inequality.} We now prove the left-hand-side inequality. If $B_2 < \textrm{max-path}(v) $, the claim trivially follows. Otherwise, it suffices to prove that \PT has a winning strategy if the game starts in $v$, \PO's initial budget is strictly less than $t_v\cdot (B_2 - \textrm{max-path}(v))$ and \PT's initial budget is $B_2$.
	
	Suppose that $B_2 \geq \textrm{max-path}(v)$ and let $B_1  < t_v\cdot (B_2 - \textrm{max-path}(v) )$ be the initial budget of \PO. We construct the winning strategy of \PT as follows. Let $\sigma_{\textrm{cont}}$ be the memoryless winning strategy of \PT under {\em continuous-bidding} when the game starts in $v$, \PT's initial budget is $B_2 - \textrm{max-path}(v)$ and \PO's initial budget is $B_1$. Since  $B_1  < t_v\cdot (B_2 - \textrm{max-path}(v))$, such a strategy exists by the definition of the continuous threshold $t_v$ and by~\cite[Theorem 7]{LLPSU99} which shows that it is possible to pick a memoryless winning strategy. We then construct a winning strategy $\sigma_{\textrm{disc}}$ of \PT in the poorman discrete-bidding game when \PT has initial budget $B_2$ and \PO has initial budget $B_1$ as follows:
	\begin{itemize}
		\item At each turn, if \PT under $\sigma_{\textrm{cont}}$ would bid $b$, then \PT under $\sigma_{\textrm{disc}}$ bids $\lceil b \rceil$.
		\item If \PT wins the bidding, then the token is moved to the vertex perscribed by $\sigma_{\textrm{cont}}$. 
	\end{itemize}
	Note that, if we show that the bids $\lceil b \rceil$ under $\sigma_{\textrm{disc}}$ are legal (i.e.~do not exceed available budget), then $\sigma_{\textrm{disc}}$ is clearly winning for \PT. Indeed, $\sigma_{\textrm{cont}}$ is winning for \PT, the bids of $\sigma_{\textrm{disc}}$ are always as least as big as those of $\sigma_{\textrm{cont}}$ and the token moves under two strategies coincide. So we only need to prove that the bids $\lceil b \rceil$ under $\sigma_{\textrm{disc}}$ are legal. But this follows from the fact that the underlying graph is a DAG and thus the game takes at most $\textrm{max-path}(v)$ turns before it reaches either the target or the sink vertex. Hence, as the bids are legal under $\sigma_{\textrm{cont}}$ when \PT has initial budget $B_2 - \textrm{max-path}(v)$, the bids are also legal under $\sigma_{\textrm{disc}}$ as \PT can bid $b + 1 \geq \lceil b \rceil$ in each turn. This concludes the proof of the left-hand-side of the inequality.
\end{proof}

An immediate corollary of Thm.~\ref{stm:pipe} is that the ratio $T_v(B_2)/B_2$ tends to $t_v$. 

\begin{corollary}[Convergence to continuous ratios]\label{cor:convergence}
	For every $v \in V$ we have $\lim_{B_2\rightarrow\infty} T_v(B_2) / B_2 = t_v$.
\end{corollary}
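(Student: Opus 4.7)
The plan is to derive this directly from the Pipe theorem (Thm.~\ref{stm:pipe}) by a sandwich argument. The Pipe theorem gives, for every $B_2 \in \mathbb{N}$, the two-sided bound
\begin{equation*}
t_v \cdot \bigl(1 - \textnormal{max-path}(v)/B_2\bigr) \;\leq\; T_v(B_2)/B_2 \;\leq\; t_v,
\end{equation*}
so the goal is to show that the left endpoint converges to $t_v$ as $B_2 \to \infty$.

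First I would observe that $\textnormal{max-path}(v) \leq |V| - 1$, which is a constant depending only on the game $\G$ and not on $B_2$. Consequently, $\textnormal{max-path}(v)/B_2 \to 0$ as $B_2 \to \infty$, and therefore the lower bound $t_v \cdot (1 - \textnormal{max-path}(v)/B_2)$ converges to $t_v$. The upper bound is the constant $t_v$. Applying the squeeze theorem to the sequence $T_v(B_2)/B_2$ then yields $\lim_{B_2 \to \infty} T_v(B_2)/B_2 = t_v$, as required.

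There is essentially no obstacle here, since the Pipe theorem has already done the real work; the only minor point to handle cleanly is the case $t_v = \infty$ (if we allow it), where $v$ is a vertex from which \PT wins for any finite budget ratio. In that case the lower bound in the Pipe theorem is vacuous but $T_v(B_2)/B_2$ must grow without bound (otherwise some finite multiple of $B_2$ would suffice for \PO, contradicting $t_v = \infty$), so the limit is still $t_v = \infty$ in the extended sense. In the generic case $t_v < \infty$, the squeeze argument above is a one-line deduction, which is why we state it as a corollary rather than a theorem.
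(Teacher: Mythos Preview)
Your proposal is correct and follows exactly the approach the paper intends: the paper presents this corollary as an immediate consequence of Thm.~\ref{stm:pipe} without spelling out the squeeze argument, and your write-up simply makes that one-line deduction explicit. The extra remark about the $t_v = \infty$ case is harmless but unnecessary here, since in the DAG setting every non-sink vertex has finite continuous ratio.
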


\subsection{Periodicity of Threshold Budgets}
The following theorem shows that for any fixed $v\in V$ the function $T_v(\cdot)$ that yields the threshold budgets exhibits an eventually periodic behavior, as seen in Example~\ref{ex:2race+1}. 

\begin{restatable}[Periodicity theorem]{theorem}{periodicity} \label{thm:periodicity}
For any vertex $v\in V$ there exist values $B, u_x,u_y \in\Nat$ such that for all $B_2\ge B$ we have $T_v(B_2+u_x)=T_v(B_2)+u_y$.
Moreover, the values $B$, $u_x$, $u_y$ can be computed in polynomial time.
\end{restatable}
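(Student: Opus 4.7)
The plan is to induct on the vertices of the DAG in reverse topological order, with base case $T_t \equiv 0$ at the target. For the inductive step at $v$, the starting point is the recursive formula
\[
T_v(B_2) \;=\; \min_{b_1 \geq 0}\, \max\bigl(b_1 + f_v(B_2),\, M_v(B_2 - b_1 - 1)\bigr),
\]
where $f_v = \min_{u \in N(v)} T_u$ and $M_v = \max_{u \in N(v)} T_u$. The outer max captures the two branches of who wins the current bid, with \PO optimizing his destination upon winning (attaining $f_v$) and \PT optimizing her bid $b_2>b_1$ and destination upon winning (attaining $M_v$, where the running max over $b_2$ collapses by monotonicity of each $T_u$ in $B_2$). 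Under the inductive hypothesis each $T_u$ for $u \in N(v)$ is eventually periodic with rate $t_u$ (rational, from the continuous recurrence for threshold ratios on DAGs); taking $U_x$ to be the LCM of the neighbors' periods, the functions $f_v$ and $M_v$ inherit periodicity with period $U_x$ and growth rates $t^* U_x$ and $t^{**} U_x$ per period, where $t^* := \min_u t_u$ and $t^{**} := \max_u t_u$, using that for large $B_2$ the argmin (resp.~argmax) lies among the neighbors achieving $t^*$ (resp.~$t^{**}$), all of whose thresholds shift in lock-step.

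Reparameterizing by $k := B_2 - b_1 - 1$ and $\alpha := B_2 - 1 + f_v(B_2)$ rewrites the threshold as
\[
T_v(B_2) \;=\; \min\bigl(M_v(k^*),\, \alpha - k^* + 1\bigr), \qquad k^* := \min\{k \geq 0 : M_v(k) + k \geq \alpha\},
\]
using that $\alpha - k$ is decreasing and $M_v(k)+k$ is non-decreasing in $k$. Writing $(1 + t^*)/(1 + t^{**}) = p/q$ in lowest terms, I set $u_x := q U_x$ and $u_y := t_v u_x$; both are integers since $t_v = t^{**}p/q$ by the continuous recurrence $t_v = t^{**}(1+t^*)/(1+t^{**})$ and since $t^{**} U_x$ is already integer. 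The heart of the proof is showing that when $B_2$ shifts by $u_x$, the crossover $k^*$ shifts by exactly $p U_x$---an integer multiple of $U_x$, so $M_v$'s periodicity applies cleanly at the shifted argument. The algebraic identity $q(1+t^*) = p(1+t^{**})$ then forces both $M_v(k^*)$ and $\alpha - k^* + 1$ to shift by exactly $u_y$, yielding $T_v(B_2 + u_x) = T_v(B_2) + u_y$ for $B_2$ sufficiently large. Polynomial-time computability of $u_x$, $u_y$, and the threshold $B$ follows from the polynomial-time computation of continuous thresholds on DAGs, together with the explicit bounds of \cref{stm:pipe} used to locate the stabilization points for the argmin/argmax in $f_v, M_v$ and for the crossover to enter the periodic regime.

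The main obstacle is verifying that $k^{**} := k^* + p U_x$ really is the new crossover, i.e., both $M_v(k^{**}) + k^{**} \geq \alpha'$ and $M_v(k^{**} - 1) + (k^{**} - 1) < \alpha'$ for the shifted $\alpha' := \alpha + u_x(1 + t^*)$. Both reduce, via the periodicity identity $M_v(k + pU_x) = M_v(k) + p t^{**} U_x$, to the original crossover inequalities on $k^*$ combined with $p(1+t^{**}) = q(1+t^*)$. Two boundary cases need light separate handling: when $s \in N(v)$, so $t^{**}$ is effectively $\infty$ and the max in the recursion forces $b_1 \geq B_2$, the formula degenerates to $T_v(B_2) = B_2 + f_v(B_2)$ and periodicity is inherited directly from $f_v$; when $t^* = t^{**}$, the $q$-scaling is trivial and the neighbors' period propagates unchanged.
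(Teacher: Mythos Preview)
Your proposal is correct and follows the same overall architecture as the paper's proof: induct on the DAG in reverse topological order, use the Pipe theorem to argue that for large $B_2$ only the children with extremal continuous ratio matter, collapse these into periodic functions $f_v$ and $M_v$ (the paper's $T_{v^-}$ and $T_{v^+}$), and then show that the one-step recursion propagates periodicity.

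The one place where the two presentations genuinely diverge is the combination step. The paper argues geometrically: it represents configurations as lattice points, characterizes ``$P$ lies on or above $T_v$'' by the inequality $d_x\le d_y+1$ between horizontal and vertical distances to $T_{v^+}$ and $T_{v^-}$, and then chains $u_x+u_y$ copies of the vector $(w_x,w_y)$ and $w_x+w_y$ copies of $(u_x,u_y)$ to show both distances shift by the common amount $u_xw_y-w_xu_y$. You instead work algebraically: reparameterize by $k=B_2-b_1-1$, locate the crossover $k^*$ where the increasing branch $\alpha-k$ meets the non-decreasing branch $M_v(k)$, and verify directly that shifting $B_2$ by $qU_x$ shifts $k^*$ by $pU_x$ via the identity $q(1+t^*)=p(1+t^{**})$. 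These are two renderings of the same idea; your version makes the resulting period $u_x=qU_x$, $u_y=pt^{**}U_x$ drop out of the arithmetic more transparently, while the paper's picture makes it easier to see \emph{why} the condition $d_x\le d_y+1$ is the right invariant without first solving for $T_v$ explicitly. Your handling of the two degenerate cases ($s\in N(v)$ and $t^*=t^{**}$) matches what the paper needs as well.
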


\stam{
\begin{proof}[Proof sketch] 
(See supplementary material for the full proof.)
The proof is by induction with respect to the topological order of the graph.
If $v$ is a leaf, then the claim is obvious. 
Consider $v$ that is not a leaf.
The proof is based on three ingredients.
First, intuitively, when the children of $v$ have different threshold ratios then their pipes diverge.
Let $v^-$ and $v^+$ respectively denote the children whose pipe is lowest and highest.
By Thm.~\ref{stm:pipe}, under discrete-bidding, for large budgets, \PO and \PT will respectively proceed to $v^-$ and $v^+$ upon winning the bidding in $v$.

Second, we show that if $v^-$ satisfies $T_{v^-}(B_2+u^-_x)=T_{v^-}(B_2)+u^-_y$ and
$v^+$ satisfies $T_{v^+}(B_2+u^+_x)=T_{v^+}(B_2)+u^+_y$ (both for large enough $B_2$),
then $v$ satisfies the same equality with $u_x=u^-_x\cdot (u^+_x+u^+_y)$ and $u_y=u^+_y\cdot(u^-_x+u^-_y)$. We illustrate the idea using~\cref{fig:u-climbing}, which depicts a configuration $c=\zug{v,B_1,B_2}$ as a point $[B_2, B_1]$ in the plane. Consider first the left image. Suppose that \PO bids $b$ from $\zug{v, B_1, B_2}$ (see point $P$). The case that \PO wins the bidding corresponds to ``stepping down'' from $[B_2, B_1]$ to $[B_2, B_1-b]$. Note that the token moves to $v^-$. Thus, a necessary condition for $B_1 \geq T_v(B_2)$ is $B_2 - b \geq T_{v^-}(B_2)$. The second case is when \PT bids $b+1$ and wins the bidding, which corresponds to ``stepping left'' to $[B_2 - (b+1), B_1]$, the token moves to $v^+$, and we obtain a second necessary condition $B_1 \geq T_{v^+}(B_2 - (b+1))$. Then, given configurations on the thresholds of $v^-$ and $v^+$ (depicted as $Q$ and $R$), the ``lowest'' point that satisfies both conditions is a point on the threshold of $v$. The right part of~\cref{fig:u-climbing} shows how the period of $T_v$ is determined by the periods of $T_{v^+}$ and $T_{v^-}$.
\begin{figure}[ht!]
  \centering
   \includegraphics[width=1\linewidth]{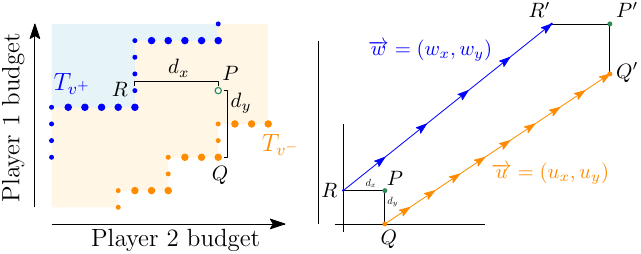}
\caption{Left: Point $P$ lies on or above $T_v$ if and only if $d_x\le d_y+1$. Right: Chaining $v_x+v_y$ copies of $u$ and $u_x+u_y$ copies of $v$, the situation repeats.
}
\label{fig:u-climbing}
\end{figure}

Third, if multiple children have the same threshold ratio, we reduce to the previous case by using the fact that a minimum of two periodic functions over integers is itself periodic. 
\end{proof}}

\begin{proof}
	We proceed by induction with respect to the topological order of the graph.
	For the target $v$ we set $(B,u_x,u_y)=(0,1,0)$ and for the sink we set $(B,u_x,u_y)=(0,0,1)$.
	Next, given a non-leaf vertex $v$, suppose that among its children there are $k$ distinct continuous threshold ratios, and denote them by $t_1<t_2<\dots<t_k$.
	Note that whenever \PO wins a bidding at vertex $v$ (while \PT has budget $B$), he moves the token to a child $u\in N(v)$ of $v$ with minimal value $T_u(B)$.
	We claim that for large enough $B$, the only relevant children $u$ are those with $t_{u}=t_1$.
	Indeed, consider two children $u,w\in N(v)$, one with $t_u=t_1$ and the other one with $t_w\ne t_1$.
	Then by \cref{stm:pipe}, for $B>t_2\cdot n / (t_2-t_1)$ we have
	\[T_w(B) \ge t_w\cdot (B-n) \ge t_2\cdot (B-n) > t_1\cdot B \ge T_u(B),
	\]
	thus \PO would prefer to move to $u$ rather than to $w$.
	
	Next, let $V^-=\{u\in N(v) \mid t_u=t_1\}$ be a set of those ``relevant'' children of $v$.
	We say that a function $f\colon \Nat\to\Nat$ is \textit{$(x,y)$-climbing} if it satisfies $f(B+x)=f(B)+y$ for all large enough $B$.
	By induction assumption, for each $u\in V^-$ the function $T_{u}(B)$ is $(u_x,u_y)$-climbing with a slope $u_y/u_x$. By \cref{stm:pipe}, this slope is equal to $t_1$.
	Thus, the function $T_u(B) - t_1\cdot B$ is periodic with period $u_x$.
	The function $\min_{u\in V^-}\{T_u(B) - t_1\cdot B\}$ is then periodic with the period equal to the least common multiple $l=\operatorname{lcm}\{u_x\mid u\in V^-\}$ of the respective periods.
	Therefore, the function $T_{v^-}(B) := \min_{u\in V^-}\{T_u(B)\}$ is $(l,l\cdot t_1)$-climbing.
	To summarize, the moves of \PO (upon winning a bidding) are faithfully represented by him moving the token to a vertex $v^-$ for which the threshold budgets are $(l,l\cdot t_1)$-climbing.
	Completely analogously we show that the moves of \PT are faithfully represented by her moving the token to a vertex $v^+$ with $(l',l'\cdot t_k)$-climbing threshold budgets.
	
	From now on, for ease of notation suppose that $T_{v^-}$ is $(u_x,u_y)$-climbing and that $T_{v^+}$ is $(w_x,w_y)$-climbing (for some integers $u_x,u_y,w_x,w_y$). We will show that $T_v$ is $(u_x\cdot(w_x+w_y), w_y\cdot(u_x+u_y))$-climbing. This will complete the induction proof.
	
	To prove this claim, it is convenient to represent each configuration $c=\zug{v,B_1,B_2}$ as a point in the plane with coordinates $[B_2,B_1]$, see~\cref{fig:u-climbing}.
	\PO can then force a win from a configuration $c=\zug{v,B_1,B_2}$ if and only if 
	the point $P=[B_2,B_1]$ lies on or above the threshold function~$T_v$.

	\begin{figure}[h!] 
		\centering
		\includegraphics[width=1\linewidth]{fig-uclimbing.pdf}
		\caption{Left: Point $P$ lies on or above $T_v$ if and only if $d_x\le d_y+1$. Right: Chaining $v_x+v_y$ copies of $u$ and $u_x+u_y$ copies of $v$, the situation repeats.
		}
		\label{fig:u-climbing}
	\end{figure}

	Take any point $P=(P_x,P_y)$.
	Let $Q=(Q_x=P_x,Q_y)$ be the furthest point below $P$ that still lies on or above $T_{v^-}$, and
	let $R=(R_x,R_y=P_y)$ be the closest point to the left of $P$ that lies on or above $T_{v^+}$.
	Note that $P$ lies on or above $T_v$ if and only if the distances $d_y:=P_y-Q_y$ and $d_x:=P_x-R_x$ satisfy $d_x\le d_y+1$. Indeed, if the inequality holds then \PO can force a win by bidding $d_y$, whereas in the other case \PT can force a win by bidding $d_y+1$. 
	
	As a final step, we show that at some point further along the curves $T_{v^-}$ and $T_{v^+}$, the two distances $d_x$, $d_y$ increase by the same margin.
	Specifically, chain $u_x+u_y$ copies of a vector $(w_x,w_y)$ starting from point $R$ to get to point $R'=(R'_x,R'_y)$, and, similarly,
	chain $w_x+w_y$ copies of $(u_x,u_y)$ from $Q$ to $Q'=(Q'_x,Q'_y)$.
	Finally, let $P'$ be the point above $Q'$ and to the right of $R'$.
	Then a straightforward algebraic manipulation shows that distances from $P'$ to $Q'$ and to $R'$ both increased by $u_xw_y - w_xu_y$.
	Indeed, without loss of generality set $Q=(d_x,0)$ and $R=(0,d_y)$.
	Then we have
	\[Q'=(d_x+ (w_x+w_y)u_x, 0+(w_x+w_y)u_y)\]
	and
	\[R'=( (u_x+u_y)w_x, d_y+(u_x+u_y)w_y),\]
	so 
	\[P'=(Q'_x,R'_y)=(d_x+ (w_x+w_y)u_x, d_y+(u_x+u_y)w_y)\]
	and finally
	\begin{align*}
		P'_y-Q'_y&=d_y+(u_x+u_y)w_y - (w_x+w_y)u_y\\
		&= d_y + (u_xw_y - w_xu_y),\\
		P'_x-R'_x &= d_x+ (w_x+w_y)u_x - (u_x+u_y)w_x \\
		&= d_x + (w_yu_x - u_yw_x),
	\end{align*}
	concluding the proof. 
\end{proof}

This result implies that for each $v\in V$, the function $T_v(\cdot)$ can be finitely represented: let $B$ be \PT's budget when the period ``kicks in'', then for all $B' \leq B$, the value $T_v(B')$ is stored explicitly and these values can be extrapolated to find $T_v(B'')$ for $B'' > B$. 

We point out that periodicity may indeed appear only ``eventually'', as illustrated by \cref{fig:eventually_periodic}; namely, only at $B = 7$ state $(2, 2)$ continuously is an optimal choice and the periodic behaviour is observed.
Replacing $\race{5,4}$ with $\race{2x + 1, 2x}$ leads to quickly growing periodicity thresholds $B$.
Finally, we note that on non-DAGs, the behaviour is not necessarily periodic, as illustrated by \cref{thm:tug2} below.

\stam{
Let $v$ that is not a leaf, by using Thm.~\ref{stm:pipe} and the fact that a minimum of two periodic functions is itself periodic, we are able to reduce to the case in which there exists
a single descendant $v^-$ of $v$ where \PO always moves the token (upon winning a bidding), and
a single descendant $v^+$ of $v$ where \PT always moves the token.
Then we show that if $v^-$ satisfies $T_{v^-}(B_2+u^-_x)=T_{v^-}(B_2)+u^-_y$ and
$v^+$ satisfies $T_{v^+}(B_2+u^+_x)=T_{v^+}(B_2)+u^+_y$ (both for all large enough $B_2$),
then $v$ satisfies the same equality with $u_x=u^-_x\cdot (u^+_x+u^+_y)$ and $u_y=u^+_y\cdot(u^-_x+u^-_y)$. This completes the induction proof.
\end{proof}
As a consequence of Thm.~\ref{thm:periodicity}, we obtain that for each $v\in V$ the whole function $T_v(\cdot)$ can be represented using a finite amount of information -- its values before the period kicks in, and its values over one period.\todo{Sell this more; maybe "finite representation"}
Moreover, the periodic behaviour indeed appears only eventually, as illustrated by Figure~\ref{fig:eventually_periodic}.
There, it takes up to $B = 79$ until state $(2, 2)$ continuously is an optimal choice.
Only then do we observe periodic behaviour.
}

%
%
%
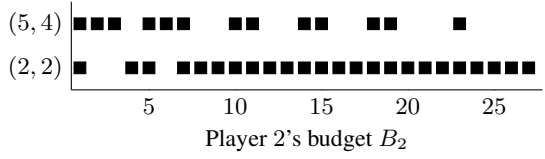
\begin{figure}[t]
	\centering
	\begin{tikzpicture}
	\begin{axis}[
			width=0.9\columnwidth,height=2.75cm,
			xlabel={\PT's budget $B_2$},
			ytick={1,2},yticklabels={{$(2,2)$}, {$(5, 4)$}},
			xmin=0.5,xmax=27.8,ymin=0.5,ymax=2.5,axis x line*=bottom,
			axis y line*=left,y tick label style={major tick length=0pt}
		]
		\addplot[only marks,mark color=black,mark size=2.2pt,mark=square*] table [x index=0, y index=2, col sep=comma] {data/winning_bids_race22_race1110.csv};
		\addplot[only marks,mark color=black,mark size=2.2pt,mark=square*] table [x index=0, y index=1, col sep=comma] {data/winning_bids_race22_race1110.csv};
	\end{axis}
	\end{tikzpicture}
	\caption{
We consider a game comprising a root node $v$ with two children, which are roots to \race{5,4} and \race{2,2}.
We depict \PO's {\em winning moves}: for each \PT's budget $B_2$, we depict the vertex (or vertices) that \PO may proceed to upon winning the bidding at configuration $\zug{v, T_v(B_2), B_2}$.
}
	 \label{fig:eventually_periodic}
\end{figure}
\stam{We evaluate the root node of a game offering a choice between playing \race{11,10} and \race{2,2}. We depict the ``winning moves'' of \PO, i.e.\ for each budget of \PT, which of the two options \PO has to choose when winning with an optimal bid, denoted $v^-$ in the proof of Thm.~\ref{thm:periodicity}.}

\stam{

\section{Threshold Budgets for Games on DAGs}\label{sec:pipeandstabilization}

We now focus on poorman discrete-bidding games played on directed acyclic graphs (DAGs). The central result of this section are upper and lower bounds that relate the threshold budgets in poorman discrete-bidding games to the threshold budgets in poorman continuous-bidding games. Furthermore, we show that our bounds are asymptotically tight. 

The practical importance of our bounds is that they yield a practical algorithm for computing asymptotically tight bounds on threshold budgets. In particular, to compute bounds on threshold budgets in poorman discrete-bidding games, one can use our bounds together with the method for computing threshold budgets in poorman continuous-bidding games~\cite{xxx}. CITE THE 90s PAPER The complexity of the algorithm is in \textsc{PSPACE}.

The following theorem is the main result of this section. We call it the \textit{Pipe theorem}, as it provides a two-sided bound on threshold budgets in poorman discrete-bidding games where both bounds have linear dependence on the initial budget of \PT. Recall, given a continuous-bidding poorman game, the continuous threshold at a vertex $v$ is a value $t_v \in [0,1]$ such that when the ratio between the two players' budgets is $t_v+\epsilon$, \PO wins, and when the ratio is $t_v-\epsilon$, \PT wins.

\begin{theorem}[Pipe theorem]
	Consider a poorman discrete-bidding game \(\calG = \zug{V, E, B_1, B_2, \calO}\) for which the underlying game graph $(V,E)$ is a DAG. Let $v\in V$ be a vertex and let $t_v$ denote the continuous threshold at $v$. Then, for every initial budget $B_2\in\mathbb{N}$ of \PT, we have
	\[ t_v \cdot (1- \frac{\textnormal{max-path}(v)}{B_2})  \leq \frac{T_v(B_2)}{B_2} \leq t_v, \]
	where {\em $\textnormal{max-path}(v)$} is the length of the longest path from $v$ to the target vertex or the sink vertex of the game. Note that $\textnormal{max-path}(v) \leq |V| - 1$. Furthermore, the right-hand-side inequality holds even if the game graph $(V,E)$ is not a DAG.
\end{theorem}

\begin{proof}[Proof sketch]
	In what follows, we outline the key ideas behind the proof. The full proof can be found in the Appendix. To prove the right-hand-side inequality, we show that if \PO has initial budget of at least $t_v \cdot B_2$ then \PO can win by following the winning strategy in the {\em continuous-bidding} game and {\em rounding down} the bids. In other words, if $\sigma_{\textrm{continuous}}$ is a winning strategy for \PO under continous-bidding when the game starts in $v$, \PO's initial budget is at least $t_v \cdot B_2$ and \PT's initial budget is $B_2$, then the strategy $\sigma_{\textrm{discrete}}$ of \PO which
	\begin{itemize}
		\item bids $\lfloor b\rfloor$ whenever $\sigma_{\textrm{continuous}}$ bids $b$, and
		\item upon winning moves the token to the vertex perscribed by $\sigma_{\textrm{continuous}}$
	\end{itemize}
	is winning for \PO under discrete-bidding. To prove the left-hand-side inequality, we show that if \PO has initial budget strictly less than $t_v \cdot (B_2-\textnormal{max-path}(v))$ and \PT has initial budget $B_2$, then \PT can win by following the winning strategy in the {\em continuous-bidding} game with the initial budget $B_2-\textnormal{max-path}(v)$ and {\em rounding up} the bids. In other words, if $\sigma_{\textrm{continuous}}$ is a winning strategy for \PT under continuous-bidding when the game starts in $v$, \PO's initial budget is at least $t_v \cdot (B_2-\textnormal{max-path}(v))$  and \PT's initial budget is $B_2-\textnormal{max-path}(v)$, then the strategy $\sigma_{\textrm{discrete}}$ of \PO which
	\begin{itemize}
		\item bids $\lceil b\rceil$ whenever $\sigma_{\textrm{continuous}}$ bids $b$, and
		\item upon winning moves the token to the vertex perscribed by $\sigma_{\textrm{continuous}}$
	\end{itemize}
	is winning for \PT under-discrete bidding with initial budget $B_2$. The fact that \PT always has enough budget to bid $\lceil b\rceil$ follows from the fact that the game is played on a DAG so needs to end in at most $\textnormal{max-path}(v)$ turns.
\end{proof}

Pipe theorem has two useful corollaries. The first is that its bounds are asymptotically tight, in the sense that they imply that the discrete-bidding ratio $T_v(B_2)/B_2$ at every vertex $v$ converges to the continuous threshold $t_v$  at $v$ as $B_2\rightarrow\infty$, since $\textrm{max-path}(v)/B_2 \leq \frac{|V|-1}{B_2}$ converges to $0$.

\begin{corollary}
	$\lim_{B_2\rightarrow\infty}\frac{T_v(B_2)}{B_2} = t_v$ for every $v\in V$.
\end{corollary}

The second corollary concerns \PO's selection of the successor vertex to which the token should be moved upon winning the bidding. 
In particular, we show that there exists a lower bound $B$ on \PT's initial budget such that, whenever \PT's initial budget is $B_2 \geq B$ and \PO's initial budget exceeds the threshold budget $T_v(B_2)$ at some initial vertex $v$, then \PO has a winning strategy that moves the token to a neighbour of $v$ which minimizes the continuous threshold among all neighbours of $v$. We call the following corollary the stabilization lemma, since it shows that there exists a lower bound $B$ on \PT's initial budget beyond which the token to which \PO moves the token upon winning the bid {\em stabilizes} within the set of neighbouring vertices at which the continuous threshold is minimized.

\begin{corollary}[Stabilization lemma]
	Let
	\[ B = 1 + \max_{v',v''\in V, t_{v'} > t_{v''}} \textnormal{max-path}(v') \cdot \frac{t_{v'}}{t_{v'} - t_{v''}}. \]
	Then, for each initial vertex $v \in V$, initial budget $B_2 \geq B$ of \PT and initial budget $B_1 \geq T_v(B_2)$ of \PO, there exists a winning strategy of \PO that upon winning the bidding moves the token from $v$ to $v^- \in \text{argmin}_{(v,v')\in E} t_{v'}$.
\end{corollary}

\begin{proof}
	Our choice of $B$ ensures that, whenever $t_{v'} > t_{v''}$, we also have $t_{v'} \cdot (B - \textnormal{max-path}) > t_{v''} \cdot B$. Hence, for every initial budget $B_2 \geq B$ of \PT and for every pair of vertice $v',v''\in V$ with $t_{v'} \neq t_{v''}$, we have that the two interval bounds $t_{v'} \cdot (B_2 - \textnormal{max-path}) \leq T_{v'}(B_2) \leq t_{v'} \cdot B_2$ and $t_{v''} \cdot (B_2 - \textnormal{max-path}) \leq T_{v''}(B_2) \leq t_{v''} \cdot B_2$ on threshold budgets for discrete-bidding games are disjoint. But this also implies that, whenever $t_{v'} > t_{v''}$ and $B_2 \geq B$, we must also have $T_{v'}(B_2) > T_{v''}(B_2)$. Hence, as a winning strategy of \PO moves the token to a vertex that minimizes the discrete-bidding threshold budget needed to win, this monotonicity result also implies that the winning strategy should move the token to a vertex that minimizes the continuous threshold.
\end{proof}
}

\section{Closed-form Solutions}\label{sec:closed-form}

In this section, we show closed-form solutions for threshold budgets in two special classes of games.

\subsection{Race Games}

Race games are a class of games played on DAGs. For $a, b \in \Nat$, the race game $\race{a,b}$ ends within $a+b$ turns, \PO wins the game if he wins $a$ biddings before \PT wins $b$ biddings. The key property of race games that we employ is that for each vertex $v$ independent of the budgets, there is a neighbor $v_i$ such that \PLi proceeds to $v_i$ upon winning the bidding at $v$, for $i \in \set{1,2}$. 
\cref{fig:race} depicts \race{3,3}.
\begin{figure}[ht]
	\centering
	\begin{tikzpicture}[xscale=0.5, yscale=0.5]

		\draw (0,0) node[rond5, minimum size = 0.75cm] (n33) {\(v_{3,3}\)};
		\draw (2,0) node[rond5] (n23) {\(v_{2,3}\)};
		\draw (4,0) node[rond5] (n13) {\(v_{1,3}\)};

		\draw (0,-2) node[rond5] (n32) {\(v_{3,2}\)};
		\draw (2,-2) node[rond5] (n22) {\(v_{2,2}\)};
		\draw (4,-2) node[rond5] (n12) {\(v_{1,2}\)};

		\draw (0,-4) node[rond5] (n31) {\(v_{3,1}\)};
		\draw (2,-4) node[rond5] (n21) {\(v_{2,1}\)};
		\draw (4,-4) node[rond5] (n11) {\(v_{1,1}\)};

		\draw (6,-2) node[rect,minimum size=0.5cm] (t) {\(t\)};
		
		\draw (2, -6) node[diam,minimum size=0.75cm] (b) {\(s\)};
		
		\draw (n33) edge[->] (n23);
		\draw (n23) edge[->] (n13);
		\draw (n13) edge[->] (t);
		
		\draw (n33) edge[->] (n32);
		\draw (n23) edge[->] (n22);
		\draw (n13) edge[->] (n12);

		\draw (n32) edge[->] (n22);
		\draw (n22) edge[->] (n12);
		\draw (n12) edge[->] (t);

		\draw (n32) edge[->] (n31);
		\draw (n22) edge[->] (n21);
		\draw (n12) edge[->] (n11);
		
		\draw (n31) edge[->] (b);
		\draw (n21) edge[->] (b);
		\draw (n11) edge[->] (b);

		\draw (n31) edge[->] (n21);
		\draw (n21) edge[->] (n11);
		\draw (n11) edge[->] (t);
		
	\end{tikzpicture}
	\caption{\(\race{3,3}\)}\label{fig:race}
\end{figure}

In the following, we establish closed-form of threshold budgets at any vertex of a race game \race{a, b} by induction. 

\begin{restatable}{theorem}{thresholdrace}
\label{thm:thresholdrace}
Let $v$ be the root of a race game $\race{a,b}$. Then $T_v(B_2) = a\cdot \floor{B_2/b}$.
\end{restatable}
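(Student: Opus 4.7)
The plan is to prove the stronger statement that $T_{v_{i,j}}(B_2) = i \cdot \lfloor B_2 / j \rfloor$ for every $i, j \ge 1$ and every $B_2 \in \Nat$, where $v_{i,j}$ is the vertex at which \PO still needs $i$ bidding wins and \PT still needs $j$; the theorem then follows as the special case $(i,j) = (a,b)$. I would proceed by induction on $i+j$, exploiting the simplification specific to race games that neither player has any move choice: winning at $v_{i,j}$ forces the token to $v_{i-1,j}$ (for \PO) and to $v_{i,j-1}$ (for \PT), with the convention that $v_{0,j}$ is the target and $v_{i,0}$ is the sink.

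The base case $i=j=1$ is immediate: the next bidding terminates the game, and since \PO wins ties, the threshold is exactly $B_2 = 1 \cdot \lfloor B_2/1 \rfloor$. For the inductive step I set $q := \lfloor B_2/j \rfloor$ and write $B_2 = qj + r$ with $0 \le r < j$. The central arithmetic estimate I would isolate is that, for $j \ge 2$, $\lfloor (B_2 - c)/(j-1) \rfloor \le q$ whenever $c \ge q + 1$, and $\lfloor (B_2 - q)/(j-1) \rfloor \ge q$; both follow from a one-line case split on whether $r = 0$ or $r \ge 1$.

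\emph{Upper bound.} \PO, with budget $iq$, bids $q$. If he wins the bidding, the token moves to $v_{i-1,j}$ with remaining budget $(i-1)q = T_{v_{i-1,j}}(B_2)$, so he wins by the IH (or lands at the target when $i=1$). If \PT wins with bid $b_2 \ge q+1$, the token goes to $v_{i,j-1}$ with \PT's budget $B_2 - b_2$, and the arithmetic estimate yields $T_{v_{i,j-1}}(B_2 - b_2) \le iq$, so \PO still meets the threshold by the IH. (The case $j=1$ is vacuous here: \PT cannot bid $q+1 = B_2+1$.) \emph{Lower bound.} Assuming $q \ge 1$, \PT bids $q$ against \PO's budget $iq - 1$. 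If \PO wins the bidding (possible only when $i \ge 2$, since for $i=1$ his budget $q-1$ is already below $q$), then he moves to $v_{i-1,j}$ with budget at most $(i-1)q - 1 < T_{v_{i-1,j}}(B_2)$, losing by the IH. If \PT wins, she moves to $v_{i,j-1}$ with budget $B_2 - q$, and the second arithmetic estimate gives $T_{v_{i,j-1}}(B_2 - q) \ge iq > iq - 1$, so \PO loses by the IH (or trivially when $j=1$, since $v_{i,0}$ is the sink). The degenerate case $q = 0$, i.e.\ $B_2 < j$, deserves a separate direct argument: each winning bid costs \PT at least $1$, so with budget $B_2 < j$ she cannot accumulate the required $j$ wins, hence \PO wins from any budget, consistent with $i \cdot 0 = 0$.

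The main obstacle, aside from bookkeeping the boundary cases $i=1$, $j=1$, $q=0$, is the arithmetic estimate on the two floor expressions: getting the off-by-one right (in particular the interaction of tie-breaking with the strict inequality $b_2 \ge q+1$ and the IH thresholds) is what the whole argument hinges on. Once that estimate is in hand, both directions of the induction reduce to one-line budget bookkeeping.
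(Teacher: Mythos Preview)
Your proposal is correct and follows essentially the same route as the paper: both prove the stronger claim $T_{v_{i,j}}(B_2)=i\lfloor B_2/j\rfloor$ by induction, with \PO bidding $q=\lfloor B_2/j\rfloor$ for the upper bound and \PT bidding $q$ for the lower bound, reducing everything to the floor inequality $\lfloor(B_2-q-1)/(j-1)\rfloor\le q$. Your single induction on $i+j$ and explicit second estimate $\lfloor(B_2-q)/(j-1)\rfloor\ge q$ make the lower-bound step slightly cleaner than the paper's organization (which first treats the boundary strips $v_{x,1}$, $v_{1,y}$ separately and then the interior), but the content is the same.
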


\begin{proof}
	First note that, the threshold budget for \PO is \(0\) at \(t\), and \(\infty\) at \(s\).
	Let us denote any vertex of the race game as \(v_{x,y}\) where \(x\) and \(y\) are the minimum distance from the vertex to \(t\) and \(s\),  respectively.
	In this notation, the root \(v\) of \(\race{a,b}\) is referred to as \(v_{a,b}\).
	
	Note that, a subgame of \(\race{a,b}\) rooted at any vertex \(v_{x, y}\) is \(\race{x,y}\) itself. 
	We, in fact, show in the following: \(T_{v_{x, y}} = x \cdot \floor{\frac{B}{y}}\) which implies what we require.
	
	Let us now consider \(v_{1,1}\).
	At this vertex, \PO has to win the bid, otherwise \PT simply moves the token to \(s\).
	Because \PT has a budget of \(B\), and \PO wins all ties,
	his threshold budget at this vertex is \(B\), and he bids his whole budget.
	
	By induction on \(x\), we can argue that for any vertex of the form \(v_{x, 1}\), the threshold budget is \(xB\), because \PO has to win all \(x\) the bids to prevent the token reaching \(s\).
	Thus he has to bid at least \(B\) at all those \(x\) bids.
	In fact, if he has budget at most \(xB - 1\) at vertex \(v_{x,1}\), then \PT has a winning strategy:
	she bids \(B\) until she wins.
	
	Similarly, by induction on \(y\), we claim that for any vertex of the form \(v_{1, y}\), the threshold budget is \(\floor{\frac{B}{y}}\).
	The base case of this induction is \(v_{1,1}\), for which we showed earlier that the statement is true.
	Let us assume it is true for \(v_{1,y-1}\), and we prove the claim for \(v_{1,y}\).
	
	We suppose \PO's budget at \(v_{1, y}\) is \(\floor{\frac{B}{y}}\) while \PT's budget is \(B\).
	We claim that the winning strategy for \PO at vertex \(v_{1,y}\) is to bid his whole budget itself.
	If he wins the bid, he moves the token to target.
	Otherwise, \PT wins the bid by at least bidding \(\floor{\frac{B}{y}} +1\), and of course, she moves the token to \(v_{1, y-1}\) .
	Thus, her budget at \(v_{1, y-1}\) is at most \(B - (\floor{\frac{B}{y}} +1)\), while \PO's budget remains \(\floor{\frac{B}{y}}\).
	From the induction hypothesis, we know when \PT has a budget \(B - (\floor{\frac{B}{y}} +1)\), \PO's threshold budget for surely winning is \(\floor{\frac{B - (\floor{\frac{B}{y}} +1)}{y-1}}\).
	If we can show that, \(\floor{\frac{B}{y}} \geq \floor{\frac{B - (\floor{\frac{B}{y}} +1)}{y-1}}\), we are done.
	We show this in the following:
	\begin{align*}
		B - y \cdot \floor{\frac{B}{y}} &\leq y-1\\
		\implies B - \floor{\frac{B}{y}} &\leq (y-1) \cdot \floor{\frac{B}{y}} + (y-1)\\
		\implies \frac{B - \floor{\frac{B}{y}}}{y-1} &\leq \floor{\frac{B}{y}} + 1\\
		\implies \frac{B - (\floor{\frac{B}{y}}+1)}{y-1} &\leq \floor{\frac{B}{y}} + \frac{y-2}{y-1}
	\end{align*}
	
	Because \(\floor{\frac{B}{y}}\) itself is an integer, by taking \(\floor{}\) on the both side, we get \(\floor{\frac{B}{y}} \geq \floor{\frac{B - (\floor{\frac{B}{y}} +1)}{y-1}}\).
	Therefore, \(\floor{\frac{B}{y}}\) is a sufficient budget for \PO to surely win from vertex \(v_{1,y}\).
	
	Now, we need to show that this is also necessary budget for him.
	In fact, we show that when \PO has budget at most \(\floor{\frac{B}{y}} - 1\), while \PT's budget is \(B\), she has a surely winning strategy from \(v_{1,y}\).
	Her winning strategy is bidding \(\floor{\frac{B}{y}}\), until she reaches \(s\).
	Because \(B \geq y \cdot \floor{\frac{B}{y}}\), she can actually bids likewise.
	At each vertex, \PO's budget will be strictly less than what she is bidding, therefore he looses all the \(y\) bids, and the token indeed reaches the safety vertex.
	
	For a general vertex \(v_{x, y}\), we argue by induction which goes like above.
	We assume that for \(v_{x-1,y}\) and \(v_{x, y-1}\), which are the only two neighbours of \(v_{x, y}\), the threshold budget for \PO is \((x-1) \cdot \floor{\frac{B}{y}}\) and \(x \cdot \floor{\frac{B}{y-1}}\), respectively.
	
	We suppose \PO's budget at \(v_{x, y}\) is \(x \cdot \floor{\frac{B}{y}}\), and \PT's budget is \(B\).
	We claim that his wining strategy at the first bid is to bid \(\floor{\frac{B}{y}}\).
	We show that irrespective of where the token gets placed at the next vertex, he will have the respective threshold budget at that vertex.
	
	If he wins the bid at \(v_{x, y}\), his new budget becomes \((x-1)\cdot \floor{\frac{B}{y}}\), which is exactly what he needs to surely win from \(v_{x-1, y}\).
	If he looses, and the token gets placed at \(v_{x, y-1}\), \PT's budget becomes at most \(B - \floor{\frac{B}{y}}+1)\).
	It remains to show that \(x \cdot \floor{\frac{B -  \floor{\frac{B}{y}}+1)}{y-1}} \leq x\cdot \floor{\frac{B}{y}}\), which is true as we have earlier established \(\floor{\frac{B}{y}} \geq \floor{\frac{B - (\floor{\frac{B}{y}} +1)}{y-1}}\).
	It proves that \(x\cdot \floor{\frac{B}{y}}\) is the sufficient budget for \PO to surely win from \(v_{x, y}\).
	
	Finally, if \PO's budget is at most \(x \cdot \floor{\frac{B}{y}} - 1\) and \PT's budget is \(B\), then \PT wins the game if she bids \(\floor{\frac{B}{y}}\) at each bidding.
	This can be shown by another inductive argument where we assume the statement being true for vertices \(v_{x-1, y}\) and \(v_{x, y-1}\), and follow the same steps that we did for \PO above.
\end{proof}

With the exact closed-form of threshold budgets for race games, we now show that the bounds in \cref{stm:pipe} are tight.

\begin{corollary}  \label{stm:pipe_tight}
	For every rational number $q = n / m$, there exist infinitely many games $\G$ with vertex $v$ such that $t_v = q$ and for infinitely many $B$ the lower and upper bound of~\cref{stm:pipe} actually is an equality for some $B_2 > B$.
\end{corollary}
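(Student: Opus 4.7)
My plan is to exhibit, for each rational $q = n/m$ (in lowest terms), an infinite family of games witnessing the claim. The family I will use is the collection of race games $\G_a := \race{an, am}$ for $a \in \Nat$, $a \geq 1$; these are distinct DAGs, so the family is infinite. For each $\G_a$, let $v$ denote its root. A short induction on $a' + b'$ (or, equivalently, applying \cref{cor:convergence} to the formula of \cref{thm:thresholdrace}) establishes that the continuous threshold at the root of $\race{a',b'}$ is $a'/b'$, whence $t_v = an/(am) = q$.

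For the upper bound of \cref{stm:pipe}, I will invoke \cref{thm:thresholdrace} to obtain $T_v(B_2) = an \lfloor B_2/(am) \rfloor$ and instantiate $B_2 = amk$ for $k \in \Nat$: the computation gives $T_v(B_2) = ank = (n/m) \cdot amk = t_v \cdot B_2$, so $T_v(B_2)/B_2 = t_v$ and the upper inequality becomes an equality. Given any threshold $B$, choosing $k > B/(am)$ yields $B_2 > B$, and infinitely many such $k$ exist, providing the required infinitely-often tightness within each $\G_a$.

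For the lower bound, a direct computation using $\text{max-path}(v) = a(n+m) - 1$ reveals that, in the pure race game $\G_a$, the gap
\begin{equation*}
T_v(B_2) - t_v\bigl(B_2 - \text{max-path}(v)\bigr) \;=\; (n/m)\bigl(a(n+m) - 1 - (B_2 \bmod am)\bigr)
\end{equation*}
stays strictly positive for all $B_2$, with minimum value $an^2/m$ attained at $B_2 \equiv am - 1 \pmod{am}$. Hence race games alone do not yield lower-bound tightness, and this is the main technical obstacle. To resolve it, I would replace $\G_a$ by a game whose root has several race-subgame children of the same continuous threshold $q$ but with different step sizes, so that the threshold at the composed root is a pointwise minimum of the children's step functions and thereby reaches down to the lower-bound line infinitely often; computing the threshold at this composed root then proceeds by backward induction, invoking \cref{thm:thresholdrace} on each child subgame and analyzing the bidding at the new root directly.
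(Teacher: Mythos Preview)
Your upper-bound argument is correct and coincides with the paper's approach: take the family $\G_a=\race{an,am}$, use \cref{thm:thresholdrace}, and evaluate at $B_2=amk$.

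For the lower bound, you have in fact caught a genuine error in the paper. The paper's proof asserts that in $\race{n,m}$ one has $\textnormal{max-path}(v)=\max(n,m)$, and then claims tightness follows by inserting the closed form. But every maximal path from the root $v_{n,m}$ alternates between decreasing the two coordinates until one hits $1$, and then takes one more step; its length is $n+m-1$, not $\max(n,m)$. Your computation $\textnormal{max-path}(v)=a(n+m)-1$ for $\G_a$ is the correct one, and with it your gap formula
\[
T_v(B_2)-t_v\bigl(B_2-\textnormal{max-path}(v)\bigr)=\tfrac{n}{m}\bigl(a(n+m)-1-(B_2\bmod am)\bigr)
\]
shows the lower bound is \emph{never} attained in a pure race game (the minimum value is $an^2/m>0$). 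So the paper's proof of lower-bound tightness does not go through as written.

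Your proposed fix—placing a fresh root above several race-game children with the same ratio $q$ but different periods—is only sketched, and as stated it is not clear it works. Two issues: (i)~adding a root increases $\textnormal{max-path}$ by one, shifting the target line; and (ii)~the threshold at the new root is \emph{not} simply the pointwise minimum of the children's thresholds, because a bidding at the root intervenes (the recurrence of \cref{def:step_function} mixes the min over successors with a cost for the bid). You would need to carry out the construction explicitly, compute $T_{\text{root}}(B_2)$ via the step operator, and verify that it actually touches the lower boundary. Without that, the lower-bound half of the corollary remains open in your proposal as well.
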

\begin{proof}
	Choose $\G = \race{n, m}$ (or any multiple thereof) and insert the closed form of \cref{thm:thresholdrace}.
	Note that in a race game $\textnormal{max-path}(v)$ of the root vertex $v$ clearly is $\max(n,m)$.
\end{proof}


\subsection{Tug-of-War games}
Given an integer \(n \geq 1\), a \textit{tug-of-war} game $\TUG(n)$ is a game played on a chain with $n+2$ nodes, namely $n$ interior nodes and two endpoints $s$ and $t$. We develop closed-form representations of thresholds in  \(\TUG(2)\) and \(\TUG(3)\) (both depicted in \cref{fig:tow}). 
For integers $k\in[1,n]$ and $b\ge 0$, we denote by $\tug(n,k,b)$ the smallest budget that \PO needs to win the tug-of-war game $\TUG(n)$ at the vertex  that is $k$ steps from his target $t$, when the opponent has budget $b$.

\begin{figure}[ht]
	\captionsetup{justification=centering}
	\begin{subfigure}{0.5\textwidth}
		\centering
		\begin{tikzpicture}[xscale=1.25]
			\draw (0,0) node[state] (0) {\(v_0\)};
			\draw (1,0) node[state] (1) {\(v_1\)};
			\draw (2,0) node[target] (2) {\(t\)};
			\draw (1, 1) node (a) {};
			\draw (1, -.5) node (b) {\(k = 1\)};
			\draw (1, -1) node (c) {\(b = 100\)};
			
			\draw (-1, 0) node[sink] (3) {\(s\)};
			
			\draw (a) edge[->] (1);
			\draw (0) edge[->, bend right = 30] (1);
			\draw (1) edge[->, bend right = 30] (0);
			\draw (1) edge[->] (2);
			\draw (0) edge[->] (3);
			
		\end{tikzpicture}
		\caption{\(\TUG(2)\)}
	\end{subfigure}\vspace{5mm}
	\begin{subfigure}{0.5\textwidth}
		\centering
		\begin{tikzpicture}[xscale=1.25]
			
			\draw (0,0) node[state] (0) {\(v_0\)};
			\draw (1,0) node[state] (1) {\(v_1\)};
			\draw (2,0) node[state] (2) {\(v_2\)};
			\draw (3,0) node[target] (4) {\(t\)};
			
			\draw (-1, 0) node[sink] (3) {\(s\)};
			
			\draw (0, 1) node (a) {};
			\draw (0, -.5) node (b) {\(k = 3\)};
			\draw (0, -1) node (c) {\(b = 100\)};

			\draw (a) edge[->] (0);
			\draw (0) edge[->, bend right = 30] (1);
			\draw (1) edge[->, bend right = 30] (0);
			\draw (1) edge[->, bend right = 30] (2);
			\draw (2) edge[->, bend right = 30] (1);
			\draw (2) edge[->] (4);
			\draw (0) edge[->] (3);
		\end{tikzpicture}
		\caption{\(\TUG(3)\)}
	\end{subfigure}
	\caption{Examples of tug-of-war games for \(n = 2 \text{ and } 3\) respectively}\label{fig:tow}
\end{figure}
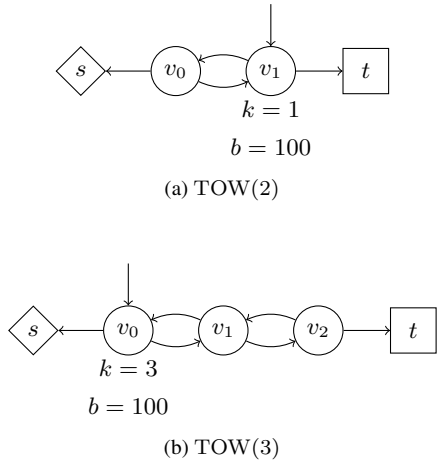

\begin{restatable}{theorem}{tugtwo}\label{thm:tug2}
	For \(b \geq 0\), we have 
        \(\tug(2,1,b)=\floor{b/\phi}\) and \(\tug(2,2,b)=\floor{b \cdot \phi}\),
       where \(\phi = (\sqrt{5}+1)/2 \approx 1.618\) is the golden ratio.
\end{restatable}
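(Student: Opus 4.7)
The plan is a strong induction on $b$. Write $f(b):=\tug(2,1,b)$ and $g(b):=\tug(2,2,b)$. Two steps: first, derive the recursion that these functions satisfy from the game mechanics; then verify that the claimed closed forms solve the recursion using the identity $\lfloor b\phi\rfloor = b + \lfloor b/\phi\rfloor$, which holds for every integer $b \geq 0$ because $\phi = 1 + 1/\phi$.

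At $v_0$, any successful bid by \PT\ moves the token to $s$, so \PO\ must win every bidding. Since ties favour \PO\ and \PT\ may bid up to $b$, \PO's unique cheapest winning bid is $b$ itself, after which the configuration becomes $\zug{v_1, B_1 - b, b}$; thus $g(b) = b + f(b)$. At $v_1$, if \PO\ bids $\beta$ and wins he moves to $t$; otherwise \PT\ counters with $\beta + 1$ (her cheapest winning bid, affordable whenever $\beta < b$) and the game reaches $\zug{v_0, B_1, b - \beta - 1}$, requiring $B_1 \geq g(b - \beta - 1)$. Combining,
\[
f(b) \;=\; \min_{0 \leq \beta \leq b-1} \max\bigl(\beta,\; g(b - \beta - 1)\bigr),
\]
with the option $\beta = b$ (cost $b$) dominated once $b \geq 1$.

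Substituting the inductive hypothesis $g(b') = \lfloor b'\phi\rfloor$ for $b' < b$, the inductive goal becomes
\[
\min_{0 \leq \beta \leq b-1} \max\bigl(\beta,\; \lfloor(b-\beta-1)\phi\rfloor\bigr) \;=\; \lfloor b/\phi\rfloor.
\]
Set $F := \lfloor b/\phi\rfloor$, so $F\phi \leq b < (F+1)\phi$. For the $\leq$ direction, pick $\beta = F$: the bound $b < (F+1)\phi$ gives $(b - F - 1)\phi < F + 1$, so $\lfloor(b - F - 1)\phi\rfloor \leq F$ and the max equals $F$. For the $\geq$ direction, take any $\beta < F$: the bound $F\phi \leq b$ combined with $\phi^2 = \phi + 1$ yields $(b - F)\phi \geq F$, whence $\lfloor(b - \beta - 1)\phi\rfloor \geq \lfloor(b - F)\phi\rfloor \geq F$, forcing the max to be at least $F$. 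This closes the induction, giving $f(b) = \lfloor b/\phi\rfloor$ and then $g(b) = \lfloor b\phi\rfloor$ via the identity.

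The only delicate part is pinning the floor functions against the defining bounds $F\phi \leq b < (F+1)\phi$; everything else is routine game-theoretic bookkeeping. As a byproduct the argument exhibits optimal strategies: at $v_0$ \PO\ bids his full budget $b$, and at $v_1$ he bids exactly $\lfloor b/\phi\rfloor$.
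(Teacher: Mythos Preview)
Your proof is correct and follows essentially the same approach as the paper: derive the recurrences $g(b)=b+f(b)$ and $f(b)=\min_{\beta}\max(\beta,\,g(b-\beta-1))$ from the game mechanics, then verify the closed forms satisfy them using $\phi=1+1/\phi$ (the paper frames the latter as ``unique solution of the recurrence'' rather than strong induction, but this is the same thing). One small slip in your closing remark: at $v_0$ \PO\ bids $b$, which is \PT's budget, not ``his full budget'' (his budget there is $g(b)=\lfloor b\phi\rfloor\ge b$).
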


\stam{
\begin{proof}
	To simplify notation, we use the same vertex names as in Fig.~\ref{fig:TOW} and, for a \PT budget $b$, we denote by \(t_b = \tug(2,1,b)\) and \(u_b = \tug(2,2,b)\), the thresholds in $v_1$ and $v_2$, respectively.
	The core of the proof follows from the following properties of \(t_b\) and \(u_b\):
	\begin{enumerate}
		\item\label{itm:tuga} \(t_0 = u_0 = 0\)
		
		\item\label{itm:tugb} \(u_b = t_b + b\) for any \(b \geq 1\)
		
		\item\label{itm:tugc} \(t_b = \min_x\{\max (x, u_{b - 1-x}) \mid 0 \leq x \leq b\}\) for any \(b \geq 1\)
	\end{enumerate}
	\cref{itm:tuga} is trivial: both players bid $0$, \PO wins ties, thus he wins all biddings (see Example~\ref{ex:TOW}).
	For \cref{itm:tugb}, consider the configuration $\zug{v_2, u_b, b}$. Since $v_2$ neighbors $s$, it is dominant for \PT to bid all her budget $b$. In order to avoid losing, \PO must bid $b$, and the game proceeds to $\zug{v_1, u_b-b, b}$, thus $t_b = u_b-b$. For \cref{itm:tugc}, consider a configuration $\zug{v_1, x, b}$ from which \PO wins, i.e., $x \geq t_b$. Note that it is dominant for \PO to bid his whole budget $x$. In order to avoid losing, \PT must bid $x+1$, and proceed to $\zug{v_2, x, b-(x+1)}$ from which \PO wins, thus $x \geq u_{b-(x+1)}$, and $t_b$ is obtained from the minimal such $x$. 
	
	This gives us the system of three equations with three unknowns (for a fixed \(b\)), thus existence of an unique solution, if any. 
	In the supplementary material, we verify that the expressions \(t_b = \floor{\frac{b}{\phi}}\) and \(u_b = \floor{b \cdot \phi}\) satisfy the equations.
	\end{proof}}

\begin{proof}
	To simplify the notation, let us assume \(t_b = \tug(2,1,b)\), and \(u_b = \tug(2,2,b)\).
	We first claim that \(t_b\) and \(u_b\) are the unique solution to the following system of recurrence relations. 
	\begin{enumerate}
		\item\label{itm:tuga} \(t_0 = u_0 = 0\)
		
		\item\label{itm:tugb} \(u_b = t_b + b\) for any \(b \geq 1\)
		
		\item\label{itm:tugc} \(t_b = \min_x\{\max (x, u_{b - 1-x}) \mid 0 \leq x \leq b\}\) for any \(b \geq 1\)
	\end{enumerate}
	\cref{itm:tuga} is obvious because \PO bids \(0\) at every step and he wins ties, when \PT has a budget \(0\).
	
	\PO needs to win at the vertex which is \(2\) steps away from his target, otherwise \PT moves the token to the other end-point.
	Therefore, \PO needs to bid \(b\), and his new budget should be, by definition, at least \(t_b\) upon winning.
	This gives us \cref{itm:tugb}.
	
	Finally, at the vertex which is a single step away from \PO's target, he needs to optimize what his bid would be between \(0\) and \(b\) so that even if he loses the current bid, he would have enough budget at the next step to win from there (i.e, \(u_b\)).
	This gives us \cref{itm:tugc}.
	
	Moreover, the system of equations has a unique solutions, as there are as many equations as there are unknowns (\(t_b, u_b\) for a fixed \(b\)).
	Hence, it is enough to show that the expressions \(t_b = \floor{\frac{b}{\phi}}\) and \(u_b = \floor{b \cdot \phi}\) satisfy those equations.
	Clearly, \(\floor{\frac{0}{\phi}} = \floor{0 \cdot \phi} = 0\), so \cref{itm:tuga} holds.
	Next note that the golden ratio satisfy \(\phi = 1 + 1/\phi\). 
	Thus,
	\[ u_b = \floor{b \cdot \phi} = \floor{b \cdot (1 + 1/\phi)} = \floor{b + b/\phi} = b + \floor{b/\phi} = b + t_b\]
	
	implying \cref{itm:tugb} holds too.
	
	Finally, note that the function $f\colon x\to x$ is increasing, hence to verify \cref{itm:tugc} we need to show two inequalities for any $b\geq 1$:
	\begin{enumerate}
		\item For $x=\floor{b/\phi}$ we have $\floor{(b-1-x)\cdot \phi}\leq \floor{b/\phi}$.
		\item For $x=\floor{b/\phi}-1$ we have $\floor{(b-1-x)\cdot \phi}\geq \floor{b/\phi}$.
	\end{enumerate}
	In both cases, we will do this by checking that the insides of the two floor functions being compared satisfy the same inequality. Upon plugging in $x$, it thus suffices to show
	\[(b-1-\floor{b/\phi})\cdot \phi \leq b/\phi
	\quad\text{and}\quad
	(b-\floor{b/\phi})\cdot \phi \geq b/\phi.
	\]
	From $\phi=1+1/\phi$ we have $b\cdot\phi-b/\phi=b$, so the desired inequalities rewrite as
	\[ b-\phi \leq \floor{b/\phi}\cdot\phi
	\quad\text{and}\quad
	\floor{b/\phi}\cdot\phi\leq b.
	\]
	Those two inequalities follow from the obvious inequalities $b/\phi-1\leq \floor{b/\phi} \leq b/\phi$ after multiplying by $\phi$.
\end{proof}

%
%

\stam{OLD	
	We first claim that \(t_b\) and \(u_b\) are the unique solutions to the following system of recurrence relations. 
	\begin{enumerate}
		\item\label{itm:tuga} \(t_0 = u_0 = 0\)
		
		\item\label{itm:tugb} \(u_b = t_b + b\) for any \(b \geq 1\)
		
		\item\label{itm:tugc} \(t_b = \min_x\{\max (x, u_{b - 1-x}) \mid 0 \leq x \leq b\}\) for any \(b \geq 1\)
	\end{enumerate}
	\cref{itm:tuga} is obvious because \PO bids \(0\) at every step and he wins ties, when \PT has a budget \(0\).
	
	\PO needs to win at the vertex which is \(2\) steps away from his target, otherwise \PT moves the token to the other end-point.
	Therefore, \PO needs to bid \(b\), and his new budget should be, by definition, at least \(t_b\) upon winning.
	This gives us \cref{itm:tugb}.
	
	Finally, at the vertex which is a single step away from \PO's target, he needs to optimize what his bid would be between \(0\) and \(b\) so that even if he loses the current bid, he would have enough budget at the next step to win from there (i.e, \(u_b\)).
	This gives us \cref{itm:tugc}.
	
	This gives us the system of three equations with three unknowns (for a fixed \(b\)), thus existence of an unique solution, if any. 
	We can indeed verify that the expressions \(t_b = \floor{\frac{b}{\phi}}\) and \(u_b = \floor{b \cdot \phi}\) satisfy those equations, detailed explanation of this can be found in the supplementary material. 
}

\begin{remark}
\label{rem:Wythoff}
\normalfont
The closed-form solution in \cref{thm:tug2} has a striking similarity to a classic result in Combinatorial Game Theory. {\em Wythoff Nim} is played by two players who alternate turns in removing chips from two stacks. A configuration of the game is $\zug{s_1, s_2}$, for integers $s_1 \geq s_2 \geq 0$, representing the number of chips placed on each stack. A player has two types of actions: (1) choose a stack and remove any $k>0$ chips from that stack, i.e., proceed to $\zug{s_1 -k, s_2}$ or $\zug{s_1, s_2 - k}$, or (2) remove any $k >0$ chips from both stacks, i.e., proceed to $\zug{s_1 - k, s_2 -k}$. The player who cannot move loses. Wythoff~\cite{Wyt07} identified the configurations from which the first player to move loses. Trivially, $\zug{0,0}$ is losing, followed by $\zug{1, 2}, \zug{3,5}, \ldots$. In general, the $n$-th losing configuration is $\zug{\lfloor n \cdot \phi \rfloor, \lfloor n \cdot \phi \rfloor + n}$. Note the similarity to the thresholds in  $v_2$ and $v_1$, which can be written respectively as $\zug{\lfloor b \cdot \phi \rfloor, \lfloor b \cdot \phi \rfloor - b}$, for $b \geq 0$.
\end{remark}

\begin{restatable}{theorem}{tugthree}\label{thm:tug3}
	For $b\ge 1$ we have
	\(\tug(3,1,b)=\floor{\frac{b-1}{2}}\), \(\tug(3,2,b)=b-1\), and \(\tug(3,3,b)=2b-1\).
\end{restatable}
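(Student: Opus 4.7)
The plan is to mirror the proof of \cref{thm:tug2}: derive a small system of recurrences that the three thresholds satisfy, argue that this system has a unique solution by induction on $b$, and then verify that the claimed closed forms solve it. For brevity, write $T_k(b):=\tug(3,k,b)$ for $k\in\{1,2,3\}$. The base case $T_1(0)=T_2(0)=T_3(0)=0$ is immediate because, with \PT having zero budget, \PO wins every bidding by the tie-breaking rule.

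The recurrences I would set up are as follows. At $v_0$ the only non-target neighbour is \PT's target $s$, so \PO must win every bid; since \PT loses nothing by bidding her whole budget (poorman, loser pays nothing), her dominant bid is $b$, which \PO must match to secure the tie. This leaves the configuration $\zug{v_1,T_3(b)-b,b}$ and forces the identity $T_3(b)=b+T_2(b)$. At $v_1$, if \PO bids $x$ then winning moves the token to $\zug{v_2,T_2(b)-x,b}$ and requires $T_2(b)-x\ge T_1(b)$, while \PT's cheapest winning response bids $x+1$ and lands at $\zug{v_0,T_2(b),b-x-1}$, requiring $T_2(b)\ge T_3(b-x-1)$; hence
\[
T_2(b)=\min_{0\le x\le b}\max\bigl(x+T_1(b),\ T_3(b-x-1)\bigr).
\]
At $v_2$, winning sends the token directly to $t$, so the analogous reasoning yields
\[
T_1(b)=\min_{0\le x\le b}\max\bigl(x,\ T_2(b-x-1)\bigr).
\]

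Uniqueness follows by strong induction on $b$: given all $T_k(b')$ for $b'<b$, the recurrence for $T_1(b)$ depends only on smaller-$b$ values of $T_2$; then $T_2(b)$ is determined given $T_1(b)$ and smaller-$b$ values of $T_3$; and finally $T_3(b)$ is fixed by the first identity. It therefore suffices to substitute the claimed closed forms $T_1(b)=\floor{(b-1)/2}$, $T_2(b)=b-1$, $T_3(b)=2b-1$ into the three equations and check they hold. The $T_3$ identity collapses to the trivial $b+(b-1)=2b-1$. For $T_1$ and $T_2$ the verification reduces to exhibiting an explicit optimal bid $x$ (roughly $\floor{(b-1)/2}$ for $T_1$ and $\floor{(b-2)/2}$ for $T_2$, with a parity-dependent adjustment) and ruling out the value one below the threshold by showing that the two bounding inequalities become simultaneously infeasible.

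The only obstacle I anticipate is the clean bookkeeping of the floor arithmetic across the two parities of $b$, ensuring that achievability holds at the claimed threshold and that infeasibility holds one below it. This is routine and essentially identical in flavour to the closing floor manipulations in the proof of \cref{thm:tug2}.
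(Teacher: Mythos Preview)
Your proposal is correct and follows essentially the same approach as the paper: you set up the identical three recurrences (the paper writes them as $v_b=u_b+b$, $u_b=\min_x\max\{t_b+x,\,v_{b-1-x}\}$, and $t_b=\min_x\max\{x,\,u_{b-1-x}\}$), argue uniqueness, and then verify the closed forms via a parity split on $b$. The only minor differences are cosmetic---your induction base is $b=0$ rather than the paper's explicit $b=1$ values, and your uniqueness argument is spelled out more carefully than the paper's one-line counting of equations versus unknowns.
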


\begin{proof}
	We proceed similarly to the proof of~\cref{thm:tug2}.
	This time, we need to check that the expressions
	\[t_b=\floor{(b-1)/2}, \quad u_b=b-1, \quad\text{and}\quad v_b=2b-1
	\]
	satisfy the relations
	\begin{enumerate}
		\item\label{itm:tug3a} $t_1=u_1=0$, $v_1=1$,
		\item\label{itm:tug3b} $v_b=u_b+b$ for any $b\geq 2$,
		\item\label{itm:tug3c} $u_b= \min_x \{  \max\{t_b+x, v_{b-1-x}\} \mid 0\leq x \leq b\}$ for any $b\geq 2$.
		\item\label{itm:tug3d} $t_b= \min_x \{  \max\{x, u_{b-1-x}\} \mid 0\leq x \leq b\}$ for any $b\geq 2$.
	\end{enumerate}
	This time, both \cref{itm:tug3a} and \cref{itm:tug3b} follow by direct substitution.
	
	Regarding \cref{itm:tug3c}, we need to show that
	\[ b-1=  \min_x \{  \max\{\floor{(b-1)/2}+x, 2b-3-2x \} \mid 0\leq x \leq b\}
	\]
	To that end, we distinguish two cases based on the parity of $b$.
	If $b=2k$ is even then we need to show
	\[ 2k-1 = \min_x\{ \max\{k-1+x, 4k-3-2x \} \mid 0\leq x \leq 2k\},
	\]
	and indeed the minimum on the right-hand side is attained for $x=k-1$ and is equal to $2k-1$ as desired.
	Similarly, if $b=2k+1$ is odd then we need to show
	\[ 2k = \min_x\{ \max\{k+x, 4k-1-2x \} \mid 0\leq x \leq 2k\},
	\]
	and indeed the minimum on the right-hand side is attained for $x=k$ and is equal to $2k$ as desired.
	
	Finally, regarding \cref{itm:tug3d} we have
	$u_{b-1-x}=b-2-x$, hence the two numbers inside the $\max(\cdot)$ function always sum up to $b-2$.
	If $b=2k$ is even, then the minimum is $(b-2)/2=k-1 = \floor{(b-1)/2}=t_b$ as desired.
	If $b=2k+1$ is odd then the minimum is $\ceil{(b-2)/2}=k=\floor{(b-1)/2}=t_b$ as desired again.
\end{proof}

\stam{
\begin{proof}
	We proceed similarly to the proof of~\cref{thm:tug2}.
	This time, we need to check that the expressions
	\[t_b=\floor{(b-1)/2}, \quad u_b=b-1, \quad\text{and}\quad v_b=2b-1
	\]
	satisfy the relations
	\begin{enumerate}
		\item\label{itm:tug3a} $t_1=u_1=0$, $v_1=1$,
		\item\label{itm:tug3b} $v_b=u_b+b$ for any $b\geq 2$,
		\item\label{itm:tug3c} $u_b= \min_x \{  \max\{t_b+x, v_{b-1-x}\} \mid 0\leq x \leq b\}$ for any $b\geq 2$.
		\item\label{itm:tug3d} $t_b= \min_x \{  \max\{x, u_{b-1-x}\} \mid 0\leq x \leq b\}$ for any $b\geq 2$.
	\end{enumerate}
	This time, both \cref{itm:tug3a} and \cref{itm:tug3b} follow by direct substitution.
	
	Regarding \cref{itm:tug3c}, we need to show that
	\[ b-1=  \min_x \{  \max\{\floor{(b-1)/2}+x, 2b-3-2x \} \mid 0\leq x \leq b\}
	\]
	
	To that end, we distinguish two cases based on the parity of $b$.
	This analysis can be found in the supplementary material.

	
	Finally, regarding \cref{itm:tug3d} we have
	$u_{b-1-x}=b-2-x$, hence the two numbers inside the $\max(\cdot)$ function always sum up to $b-2$.
	Here too, we analyse by distinguishing the parity of \(b\), and the detailed argument can be found in the supplementary material. 
\end{proof}}

We note that for $n \ge 4$ the situation gets surprisingly more complicated.
For $n=5$ the threshold budgets do eventually converge to a simple pattern, but only from around $b=4\cdot 10^3$ on.
In contrast, for $n\in\{4,6\}$ the threshold budgets exhibit no clear pattern up until $b=10^6$.
Moreover, while the pipe theorem \cref{stm:pipe} seems to hold for $n \leq 5$ (experimentally validated up to $b = 10^7$), it is (quickly) violated for $n \geq 6$.
This suggests that a simple closed form solution for general games is unlikely, given that these structurally similar games behave so differently.

\section{Algorithms for Threshold Budgets}\label{sec:thresholdgeneral}
In this section, we discuss an algorithmic approach to compute threshold budgets. We point out that the Pipe theorem (\cref{stm:pipe}) only provides an approximation for the thresholds, and periodicity (\cref{thm:periodicity}) only holds eventually, thus, in order to use it, exact thresholds need to be computed until periodicity ``kicks in''.
We study the following problem: Given a game $\G$, a vertex $v$ in $\G$, and a budget $B_2$ of \PT, determine $T_v(B_2)$. We develop an algorithm for general games, running in time pseudo-polynomial in $B_2$ and polynomial in $|\G|$, and then a specialized variant for DAGs which is pseudo-\emph{linear} in $B_2$.
In the following, we write $B$ for an ``arbitrary'' \PT budget and $B_2$ for the particular budget for which we want to compute $\budget{B_2}{v}$.

As a first step, we show that poorman discrete-bidding games end after a finite number of steps.
Consider a vertex $v$. We define the \emph{maximal step count}, denoted $\gamestepbound{\G}{B}$, to be the maximal number of steps \PT can delay reaching $t$ when the initial budgets are $B$ and $\budget{B}{v}$ for \PT and \PO, respectively, and \PO follows some winning strategy.
Let $\gamestepbound{\G}{B} = \max_v \budget{B}{v}$. The following lemma bounds $\gamestepbound{\G}{B}$.
\stam{
While our previous results give us eventual periodicity for DAGs, we know neither (i)~what happens before the periodic behaviour nor (ii)~how the period looks like, leaving us in the dark about the concrete values of $\budget{B_2}{v}$ aside from the closed forms we have shown before.
To address this issue, we discuss our algorithmic approach to obtain threshold budgets for a concrete value of \PT's budget.
In particular, we provide an algorithm that determines $\budget{B_2}{v}$ for a given vertex $v$ and budget $B_2$ of \PT.
We first provide an algorithm that operates on general games and prove that it is pseudo-polynomial in $B_2$.
Later on, we also derive a specialized variant for DAGs which is pseudo-\emph{linear} in $B_2$.
In the following, we write $B$ for an ``arbitrary'' \PT budget and $B_2$ for the particular budget for which we want to compute $\budget{B_2}{v}$ to avoid confusion.

First, we show that bidding games end after a finite number of steps.
Thus, we define the \emph{maximal step count} $\gamestepbound{\G}{B}$ as follows.
Fix an arbitrary vertex $v$.
Suppose \PT is given a budget of $B$ while \PO has $\budget{B}{v}$.
Then, we define the maximal duration of the game in this vertex as the maximal number of steps \PT can delay reaching $t$ against any winning strategy of \PO (reaching $t$ eventually is inevitable since \PO employs a winning strategy).
Let $\gamestepbound{\G}{B}$ the maximum over all vertices.
We argue that this is bounded by $|V|$ and $B_2$.
}
\begin{lemma} \label{stm:bounded_game}
	Given a budget of $\budget{B_2}{v}$, \PO can ensure winning after at most $\mathcal{O}(|V| \cdot B_2)$ steps.
\end{lemma}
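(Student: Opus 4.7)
The plan is to fix any positional winning strategy $\sigma_1$ for \PO from the configuration $c_0 = \zug{v, \budget{B_2}{v}, B_2}$ and to bound the length of any resulting play by amortized counting. I would classify each step by its pair of bids $(b_1, b_2)$ into three categories: (A) \PT wins, which forces $b_2 > b_1$ and hence $b_2 \geq 1$; (B) \PO wins with $b_1 \geq 1$; and (C) both players bid $0$ and \PO wins the tie. Since money is paid to the bank only by the winner and bids are integers, the standard bookkeeping observations are that $|A| \leq B_2$ (each type-A step costs \PT at least $1$) and $|B| \leq \budget{B_2}{v}$ (each type-B step costs \PO at least $1$).

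The main step is bounding $|C|$. A maximal run of consecutive type-C steps leaves both budgets unchanged, so because $\sigma_1$ is positional, the induced sequence of token moves during the run is a deterministic walk in the graph that depends only on the current vertex. I would then argue by contradiction: if this walk ever revisited a vertex, then \PT could simply continue to bid $0$ indefinitely from that point, producing an infinite play that never reaches $t$ and contradicting the fact that $\sigma_1$ wins against every \PT strategy (in particular against the continuation that bids $0$ forever). Hence each maximal type-C run visits at most $|V|$ distinct vertices and so has length at most $|V|-1$. Because each type-C run is separated from the next by a type-A or type-B step (or by the start/end of the play), the number of such runs is at most $|A|+|B|+1$, yielding $|C| \leq |V| \cdot (|A|+|B|+1)$.

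Summing the three counts gives the bound $|A|+|B|+|C| = \mathcal{O}(|V| \cdot (\budget{B_2}{v} + B_2))$. The claimed $\mathcal{O}(|V| \cdot B_2)$ then follows by absorbing into the hidden constant the game-dependent, linear-in-$B_2$ growth of $\budget{B_2}{v}$ (for DAGs, via the Pipe theorem \cref{stm:pipe} which gives $\budget{B_2}{v} \leq t_v \cdot B_2$; the same kind of linear bound can be extracted for general $\G$ from the analogous continuous threshold). The main obstacle is the acyclicity argument in step (C): it is essential to combine the positionality of $\sigma_1$ with \PT's freedom to bid $0$ forever from any point on, since it is precisely this \PT strategy that would witness non-termination if the induced vertex walk ever closed a loop. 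Everything else is arithmetic.
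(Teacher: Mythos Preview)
Your argument is correct and more carefully worked out than the paper's. The paper's proof is a two-sentence sketch: it groups the play into windows of $|V|$ steps and observes that in each window either $t$ is reached or \PT wins at least one bidding (paying at least $1$), which gives the bound $|V| \cdot (B_2+1)$ directly, without ever referring to $\budget{B_2}{v}$ or the Pipe theorem.

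Your A/B/C decomposition is finer: by separately counting steps where \PO wins with a positive bid you are forced to bound $|B| \leq \budget{B_2}{v}$ and then invoke the right-hand inequality of \cref{stm:pipe} (which does hold for non-DAGs) to conclude $\budget{B_2}{v} = t_v \cdot B_2 \in \mathcal{O}(B_2)$. The paper avoids this detour by implicitly considering a strategy whose vertex moves head toward $t$ on \emph{every} \PO win, so that any run of $|V|$ consecutive \PO wins---whether they are your type~B or type~C---already ends the game. What the paper glosses over is precisely the point you make rigorous: why such a strategy (or the given winning strategy) cannot loop when \PT simply bids $0$. Your acyclicity argument for maximal type-C runs, using positionality of $\sigma_1$ together with the ``bid $0$ forever'' continuation of \PT, is exactly the justification the paper's windowing argument tacitly relies on. In exchange, your route proves the stronger statement that \emph{every} positional winning strategy terminates within the stated bound, at the cost of the extra dependence on \cref{stm:pipe}.
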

\begin{proof}
	If \PT does not win a bid for $|V|$ steps, then \PO can surely move to the target $t$.
	Otherwise, \PT has to win at least one bid, decreasing the budget by at least 1 every $|V|$ steps. 
\end{proof}
We note that this is a very crude approximation, we conjecture that actually $\gamestepbound{\G}{B} \in \mathcal{O}(\log B)$, as we explain later.
However, the existence of such a bound already motivates us to consider the step-bounded variant of the game:
Let \(\stepbudget{B}{i}{v}\) equal the minimal budget that \PO needs to ensure winning from \(v\) against a budget of \(B\) \emph{in at most \(i\) steps} (or $\infty$ if this is not possible).
By Lem.~\ref{stm:bounded_game}, \(\stepbudget{B}{i}{v} = \budget{B}{v}\) for some large enough $i$.
Thus, we are interested in computing $\stepbudget{B}{i}{v}$ for increasing $i$ until convergence.
Let us briefly discuss simple cases.
For the target vertex, clearly \(\budget{B}{t} = \stepbudget{B}{i}{t} = 0\) for any \PT budget \(B\) and any \(i\).
For the sink, \(\budget{B}{s} = \stepbudget{B}{i}{s} = \infty\), as well as $\stepbudget{B}{0}{v} = \infty$ for all non-target vertices.
As it turns out, we can compute all other values by a dynamic programming approach.

We first describe a recursive characterization of $\stepbudget{B}{i}{v}$, which then immediately yields our algorithm.
To this end, we consider the \emph{step operator} $\stepop{v}{f}{b}{B}$, which given a threshold function $f$ (such as $\stepbudget{B}{i}{v}$) and vertex $v$ yields the outcome of placing bid $b$ as \PO against a \PT budget $B$.
The intuition is as follows:
Suppose $f$ is the actual threshold required to win in every vertex.
There are two distinct cases.
If \PO bids $B$, i.e.\ all of \PT's budget, a win of the auction is guaranteed.
\PO pays $B$ and then naturally moves to the ``cheapest'' successor, i.e.\ one with minimal threshold as given by $f$.
Otherwise, with a bid of $b < B$ by \PO, \PT could either bid $0$, again leaving \PO to pay $b$ and choose the best option, or bid $b + 1$, i.e.\ \PT wins instead, paying the bid and choosing the most expensive successor.
The overall best choice for \PO then directly is given as minimum over all sensible bids.
\begin{definition}\label{def:step_function}
	Let $B$ a budget for \PT and a function \(f : V \times \{0, \dots, B\} \to \bbN\) yielding a threshold for each budget (e.g. $\stepbudget{B}{i}{v}$).
	We define $\stepop{v}{f}{B}{B} = B + \min_{v' \in N(v)} f(v', B)$ and, for any other bid $0 \leq b < B$, let
	\begin{equation*}
		\stepop{v}{f}{b}{B} = \max \begin{cases}
			b + \min_{v' \in N(v)} f(v', B)     \\
			\max_{v' \in N(v)} f(v', B - (b+1))
		\end{cases}
	\end{equation*}
	Finally, $\step{v}{f}{B} = {\min}_{0 \leq b \leq B} \stepop{v}{f}{b}{B}$.
\end{definition}
Indeed, $\mathrm{step}$ allows us to iteratively compute $T_v^i$ as follows:
\begin{lemma} \label{stm:recursive_equation}
	For all $i > 0$, we have $\stepbudget{B}{i}{v} = \step{v}{T_{\circ}^{i-1}}{B}$.
\end{lemma}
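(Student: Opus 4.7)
The plan is to prove the equality by two inequalities, each a direct unfolding of the definition of $\mathrm{step}$ into the strategic content of one step of the game. Since $T^0_v(B)$ is $0$ at $t$ and $+\infty$ elsewhere (and $\stepop{v}{f}{b}{B}$ is monotone in $f$), the statement is about a single step of play, so induction on $i$ is clean; the real work is in the base-case-like step that translates between the combinatorial $\mathrm{step}$ and a concrete pair of bidding actions.

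For the direction $\stepbudget{B}{i}{v} \leq \step{v}{T^{i-1}_{\circ}}{B}$, I would let $b^{\ast}$ attain the outer minimum in $\step{v}{T^{i-1}_{\circ}}{B}$, assume \PO has budget $B_1 = \step{v}{T^{i-1}_{\circ}}{B}$, and prescribe that \PO bids $b^{\ast}$. Note that $B_1 \geq b^{\ast}$ (because $b^{\ast} + \min_{v'} T^{i-1}_{v'}(B) \leq \stepop{v}{T^{i-1}_{\circ}}{b^{\ast}}{B} = B_1$), so the bid is legal. If $b^{\ast}=B$, \PO wins ties and moves to a minimiser $v^{-}\in \arg\min_{v'\in N(v)} T^{i-1}_{v'}(B)$; the residual budget $B_1-B = \min_{v'} T^{i-1}_{v'}(B) = T^{i-1}_{v^{-}}(B)$ suffices to win in $i-1$ steps by induction. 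If $b^{\ast}<B$ and \PT bids $c\leq b^{\ast}$, \PO wins and moves to a minimiser $v^{-}$, with residual budget $B_1-b^{\ast}\geq \min_{v'} T^{i-1}_{v'}(B) = T^{i-1}_{v^{-}}(B)$ by the first branch of the $\max$. If \PT bids $c\geq b^{\ast}+1$ and moves to some $v^{+}$, monotonicity of $T^{i-1}_{v^{+}}(\cdot)$ in \PT's budget gives $T^{i-1}_{v^{+}}(B-c)\leq T^{i-1}_{v^{+}}(B-(b^{\ast}+1))\leq \max_{v'} T^{i-1}_{v'}(B-(b^{\ast}+1))\leq B_1$, so again \PO wins in $i-1$ further steps by induction.

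For the converse $\stepbudget{B}{i}{v}\geq \step{v}{T^{i-1}_{\circ}}{B}$, I would take $B_1 < \step{v}{T^{i-1}_{\circ}}{B}$ and exhibit a \PT response that defeats every \PO bid. First observe that bids $b>B$ are dominated: bidding $B$ already wins ties, so \PO only wastes budget; hence we may restrict to $0\leq b\leq B$. For such a bid, $B_1 < \stepop{v}{T^{i-1}_{\circ}}{b}{B}$. In the case $b<B$, either (i) $B_1 < b + \min_{v'} T^{i-1}_{v'}(B)$, in which case \PT bids $0$, \PO is forced to win and move to some $v'$ with residual budget $B_1-b < T^{i-1}_{v'}(B)$, so induction rules out a win in $i-1$ steps; or (ii) $B_1 < \max_{v'} T^{i-1}_{v'}(B-(b+1))$, in which case \PT bids $b+1\leq B$, wins, and moves to a maximiser $v^{+}$, whence $B_1 < T^{i-1}_{v^{+}}(B-(b+1))$ and induction again rules out a win. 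For the remaining case $b=B$, $B_1 < B+\min_{v'} T^{i-1}_{v'}(B)$; \PT bids $0$, \PO wins and moves to some $v'$ with residual budget $B_1-B < T^{i-1}_{v'}(B)$, and the inductive hypothesis closes the argument.

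The main subtleties, which I would flag explicitly in the final write-up, are the asymmetric treatment of $b=B$ (absence of the \PT-wins branch in $\stepop{v}{\cdot}{B}{B}$) caused by the tie-breaking rule, and the two invocations of monotonicity of $T^{i-1}_{v'}(\cdot)$ in \PT's budget that justify (a) restricting to \PT bidding exactly $b+1$ when she wants to win, and (b) restricting \PO to bids $b\leq B$. Neither is deep, but both are easy to omit.
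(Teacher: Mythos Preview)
Your proof is correct and follows essentially the same approach as the paper's own argument: both analyse a single round of bidding, reduce \PT's responses to the two extremal ones (bid $0$ or bid $b+1$), and appeal to the definition of $T^{i-1}$ at the resulting configuration. Your write-up is considerably more explicit than the paper's---you split cleanly into the two inequalities, handle the $b^\ast=B$ corner case separately, and flag the monotonicity of $T^{i-1}_{v'}(\cdot)$ and the domination of bids $b>B$---whereas the paper compresses all of this into a few sentences; but the underlying idea is identical.
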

\begin{proof}
	We proceed by induction over $i$.
	The correctness of the base cases follows immediately.
	To go from step $i - 1$ to $i$, observe that \PO surely never wants to bid more than $B$, since this bid suffices to guarantee winning.
	Moreover, for any fixed bid $b < B$, the opponent \PT either wants to bid $0$, letting \PO win, or $b + 1$, claiming the win at minimal potential cost:
	Bidding anything between $0$ and $b$ as \PT does not change the outcome, and bidding more than $b + 1$ certainly is wasteful.
	By this observation, we can immediately see that for each potential bid $b$ between $0$ and $B$, $\stepop{v}{T_{\circ}^{i-1}}{b}{B}$ yields the best possible outcome against an optimal opponent.
	In particular, if \PO bids $b$ but the available budget is one smaller than $\stepop{v}{T_{\circ}^{i-1}}{b}{B}$, then there exists a response of \PT where \PO is left with less budget than $\stepbudget{B'}{i - 1}{v'}$ in some vertex $v'$ against \PT budget $B'$, which by induction hypothesis is not sufficient.
\end{proof}
%
%
%
%
%
%
%

\begin{algorithm}[t]
	\caption{Iterative Algorithm to compute threshold budgets} \label{algo}
	\begin{algorithmic}
		\Require Game $\G = (V, E, t, s)$, \PT{} budget $B_2$
		\Ensure Thresholds for every $v \in V$ and $0 \leq B \leq B_2$
		\State Set $f_i(t, \circ) \gets 0$ and $f_i(s, \circ) \gets \infty$ for all $i \geq 0$
		\State Set $f_0(v, \circ) \gets \infty$ for all $v \notin \{t, s\}$
		\State Set $i \gets 0$
		\While{$f_i$ changes in the iteration}
			\For{$v \in V \setminus \{t, s\}$, $0 \leq B \leq B_2$}
				\State $f_{i+1}(v, B) = \step{v}{f_i}{B}$
			\EndFor
			\State $i \gets i + 1$
		\EndWhile
		\State \Return $f_i$
	\end{algorithmic}
\end{algorithm}

This naturally gives rise to an iterative algorithm:
Given budget $B_2$, we compute $\stepbudget{B}{i}{v}$ for all vertices $v$ and budgets $0 \leq B \leq B_2$ for increasing $i$ until a fixpoint is reached.
We briefly outline the algorithm in \cref{algo}.

At first glance, evaluating $\step{v}{f}{B}$ requires $\mathcal{O}(B \cdot |N(v)|)$ time -- we need to consider all possible bids and go over all successors.
Thus, to compute $\stepbudget{B}{i}{v}$ for all $B \leq B_2$ and vertices $v$ takes $\mathcal{O}(B_2^2 \cdot |E|)$.
(By our assumption, every vertex has at least one outgoing edge, meaning $|V| \in \mathcal{O}(|E|)$.)
While the graphs (and thus $|E|$) we consider typically are small, quadratic dependence on $B_2$ is undesirable, since we may want to compute optimal solutions for considerably large budgets.
It turns out that we can exploit some properties of $\stepbudget{B}{i}{v}$ to obtain speed-ups.
\begin{theorem} \label{stm:complexity}
	For budget $B_2$ of \PT, the threshold budget can be determined in $\mathcal{O}(\gamestepbound{\G}{B_2} \cdot B_2 \cdot \log(B_2) \cdot |E|)$.
\end{theorem}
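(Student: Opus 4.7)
The plan is to combine the iteration-count bound from \cref{stm:bounded_game,stm:recursive_equation} with a structural speed-up for evaluating $\step{v}{f}{B}$. The naive bound of $\mathcal{O}(B_2^2 \cdot |E|)$ per iteration arises because \cref{def:step_function} ranges over all bids $0 \leq b \leq B$; I will replace this linear scan by an $\mathcal{O}(\log B)$ binary search, exploiting monotonicity.

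First, by \cref{stm:bounded_game} and \cref{stm:recursive_equation}, the iterates $\stepbudget{B}{i}{v}$ stabilize at the true thresholds $\budget{B}{v}$ as soon as $i \geq \gamestepbound{\G}{B_2}$, so the outer loop of \cref{algo} executes $\mathcal{O}(\gamestepbound{\G}{B_2})$ times. It thus suffices to show that a single iteration, which computes $f_{i+1}(v, B) = \step{v}{f_i}{B}$ for every $v \in V$ and every $0 \leq B \leq B_2$, can be performed in $\mathcal{O}(B_2 \cdot \log(B_2) \cdot |E|)$ time.

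The key structural observation is a monotonicity claim: the map $B \mapsto f_i(v, B)$ is non-decreasing for every $i$ and $v$. This I would establish by induction on $i$, and in fact it follows game-theoretically, since a larger \PT budget only enlarges \PT's set of available bids and thus never hurts her. Granting this, in
\begin{equation*}
	\stepop{v}{f_i}{b}{B} = \max\Bigl\{\, b + {\min}_{v' \in N(v)} f_i(v', B),\ \ {\max}_{v' \in N(v)} f_i(v', B-(b+1)) \,\Bigr\},
\end{equation*}
the first term is strictly increasing in $b$ (the $\min$ does not depend on $b$) while the second term is non-increasing in $b$ (as $b$ grows, $B-(b+1)$ shrinks and $f_i$ is monotone). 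Their pointwise maximum is therefore \emph{unimodal} in $b$: it first decreases and then increases, crossing over where the two terms coincide. Consequently, the optimal bid can be located by binary search on $b \in \{0, \dots, B\}$ using $\mathcal{O}(\log B)$ probes, each evaluating $\max_{v' \in N(v)} f_i(v', \cdot)$ in $\mathcal{O}(|N(v)|)$ time; the constant $\min_{v' \in N(v)} f_i(v', B)$ and the boundary bid $b = B$ are handled once per pair $(v, B)$ in $\mathcal{O}(|N(v)|)$.

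Summing across all vertices and budgets, one iteration therefore costs
\begin{equation*}
	\sum_{v \in V}\sum_{B=0}^{B_2} \mathcal{O}\bigl(\log(B) \cdot |N(v)|\bigr) \;=\; \mathcal{O}\bigl(B_2 \cdot \log(B_2) \cdot |E|\bigr),
\end{equation*}
and multiplying by the $\mathcal{O}(\gamestepbound{\G}{B_2})$ iteration bound yields the claim. The main conceptual obstacle is recognizing the unimodality of $\stepop{v}{f_i}{\cdot}{B}$, which is what unlocks the binary search; the underlying monotonicity of $f_i$ in the budget coordinate is routine but essential, since otherwise the second term inside the $\max$ may fail to be non-increasing in $b$.
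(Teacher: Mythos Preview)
Your proposal is correct and follows essentially the same approach as the paper: exploit monotonicity of $f_i$ in the budget so that the two terms inside the $\max$ of \cref{def:step_function} are respectively increasing and non-increasing in $b$, giving a unimodal shape that enables binary search over bids, and then multiply the per-iteration cost $\mathcal{O}(B_2 \cdot \log(B_2) \cdot |E|)$ by the iteration bound $\gamestepbound{\G}{B_2}$. If anything you are slightly more explicit than the paper, which speaks loosely of the step function being ``convex'' and leaves the multiplication by $\gamestepbound{\G}{B_2}$ implicit in the algorithm description.
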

\begin{proof}
	Observe that $\stepbudget{B}{i}{v}$ is monotone in $B$:
	Winning against a larger budget of \PT certainly requires the same or more resources.
	Thus, the first expression of the maximum in ~\cref{def:step_function} is a (strictly) monotonically increasing function, while the second is decreasing.
	Together, the step function intuitively is convex in $b$:
	There is a ``sweet spot'', bidding too much is not worth it and bidding too little lets \PT gain too much.
	Consequently, we can determine $\stepbudget{B}{i}{v}$ by a binary search between $0$ and $B$.
	This yields a running time of $\mathcal{O}(\log B \cdot |N(v)|)$ for a fixed vertex $v$ and budget $B$.
	In turn, to compute a complete step, i.e.\ for all vertices determine $\stepbudget{B}{i}{v}$ for all budgets $B \leq B_2$, we get $\mathcal{O}(B_2 \cdot \log(B_2) \cdot |E|)$.
	(Note that $\sum_{v \in V} |N(v)| = |V|$.)
\end{proof}
%
%
\subsection{A Pseudo-Linear Algorithm for DAGs}
Using insights of the previous section together with further observations, we can obtain tighter bounds in the case of DAGs.
In particular, by exploiting both the given topological ordering as well as the bounds given by Thm.~\ref{stm:pipe}, we obtain an algorithm linear in the numerical value of $B_2$.
\begin{restatable}{theorem}{lineardagalgo} \label{stm:linear_dag_algo}
	For a DAG game and any budget $B_2$ of \PT, the threshold budget $\budget{B_2}{v}$ can be determined in $\mathcal{O}(B_2 \cdot \log(|V|) \cdot |E|)$ steps for all vertices.
\end{restatable}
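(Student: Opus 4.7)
The plan is to strengthen \cref{algo} by two independent refinements that each remove one factor from the general bound of \cref{stm:complexity}. First, I exploit that $\G$ is a DAG: processing vertices in reverse topological order, the thresholds $\budget{B}{v'}$ at every child $v' \in N(v)$ are available before $v$ is handled, so by \cref{stm:recursive_equation} a single evaluation $\budget{B}{v} = \step{v}{T}{B}$ per budget $B \in \{0, \ldots, B_2\}$ suffices. This collapses the outer while loop of \cref{algo} into a single pass and removes the $\gamestepbound{\G}{B_2}$ factor entirely.

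Second, I plan to replace the binary search cost $O(\log B_2)$ of \cref{stm:complexity} by $O(\log |V|)$ using the Pipe theorem. For fixed $v$, I introduce $M_-(B') = \min_{v' \in N(v)} T_{v'}(B')$ and $M_+(B') = \max_{v' \in N(v)} T_{v'}(B')$; both tables can be tabulated for all $B' \in \{0, \ldots, B_2\}$ in $O(B_2 \cdot |N(v)|)$ as a one-off preprocessing at $v$. The optimum $b^*$ is the smallest integer at which the strictly increasing map $b \mapsto b + M_-(B)$ meets or exceeds the non-increasing map $b \mapsto M_+(B - b - 1)$; this monotone crossing structure is exactly what the binary search in the proof of \cref{stm:complexity} exploits. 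Now, from \cref{stm:pipe} one may write $M_-(B') = t_{v^-} B' - \delta_-(B')$ and $M_+(B') = t_{v^+} B' - \delta_+(B')$ with $\delta_\pm(B') \in [0, t_{v^\pm} \cdot |V|]$, where $v^\pm \in N(v)$ denote any children whose continuous threshold ratios equal $\min_{v' \in N(v)} t_{v'}$ and $\max_{v' \in N(v)} t_{v'}$ respectively. Substituting into the crossing equation and solving for $b$ gives
\begin{equation*}
b^* = \frac{(t_{v^+} - t_{v^-}) \cdot B}{1 + t_{v^+}} + \varepsilon,
\end{equation*}
where the algebraic cancellation $t_{v^+}/(1 + t_{v^+}) \leq 1$ bounds $|\varepsilon|$ by $O(|V|)$ uniformly in $B$ and \emph{independently} of the magnitudes of the continuous ratios. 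Since the $t_{v'}$ are themselves computable in polynomial time by known methods for continuous-bidding games, the center of this $O(|V|)$-wide window is available in $O(1)$ per query; a binary search within it then locates $b^*$ in $O(\log |V|)$ evaluations of $\stepop{v}{T}{\cdot}{B}$, each costing $O(1)$ thanks to the $M_{\pm}$ tables. For the small-budget regime $B \leq |V|$ I simply binary-search the full interval $[0, B]$, which already matches the target bound.

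Combining the two refinements, the work per vertex $v$ is $O(B_2 \cdot |N(v)|)$ for the preprocessing plus $O(B_2 \cdot \log |V|)$ for the binary searches, and summing over $v$ using $\sum_v |N(v)| = |E|$ together with $|V| = O(|E|)$ yields the claimed $O(B_2 \cdot \log |V| \cdot |E|)$. The main technical obstacle I anticipate is the uniform window-size estimate: verifying that $|\varepsilon| = O(|V|)$ holds in every regime of $B$, even though each pointwise pipe error $\delta_{v'}$ can be as large as $t_{v'} \cdot |V|$ and even though the child realizing $M_\pm(B')$ may shift with $B'$ before the asymptotic stabilization suggested by \cref{cor:convergence} takes effect. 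The saving comes from the cancellation between the numerator and denominator in the crossing equation; care will additionally be needed when multiple children of $v$ tie on the extremal continuous threshold, which I handle by choosing any fixed extremal representative for the purposes of the estimate while the actual minima and maxima inside $M_\pm$ are read off the tabulated values.
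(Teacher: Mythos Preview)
Your proposal is correct and follows the paper's high-level strategy: process vertices in reverse topological order to eliminate the outer iteration of \cref{algo}, then use the Pipe theorem to confine the optimal bid to a window of size polynomial in $|V|$, and binary-search that window. The difference lies in how the bid window is obtained. The paper bounds the bid directly via the pipe interval $[B^-,B^+]$ for $T_v(B)$ itself: setting $T'=\min_{v'\in N(v)}T_{v'}(B)$, it argues that an optimal bid lies in $[B^--T',\,B^+-T']$, a window of width $t_v\cdot n\in O(|V|^2)$. Your crossing-equation argument instead applies the pipe bounds to the children's thresholds and, via the cancellation $t_{v^+}/(1+t_{v^+})\le 1$, yields a tighter $O(|V|)$ window; combined with your $M_\pm$ tabulation (which the paper does not do, paying $O(|N(v)|)$ per binary-search probe instead), this reaches the same bound by a somewhat sharper route.

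One place where the paper is more careful and you should be too: the cost of computing the continuous ratios $t_{v'}$ and of arithmetic on them. These ratios can have bit-size polynomial in $|V|$, so your ``available in $O(1)$ per query'' is not literally true. The paper bounds this preprocessing by $O(|V|^3\log|V|)$ and shows it is dominated by the main term whenever $B_2\ge |V|^2$, reverting to the plain $[0,B_2]$ binary search otherwise; your proposal should include a similar case split rather than the blanket ``polynomial time'' remark.
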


\begin{proof}
	Fix the input as in the assumptions.
	
	Firstly, we see that each vertex of the DAG is evaluated exactly once and we can, in one step, directly compute $\budget{B}{v}$ for all $0 \leq B \leq B_2$ one vertex at a time:
	Sort the vertices in reverse topological order.
	Observe that, by assumption, sink and target are the only leaves, for which computing $\budget{B}{v}$ is trivial.
	Then, inductively, whenever we compute $\budget{B}{v}$, the values $\budget{\cdot}{v'}$ of all successors $v' \in N(v)$ are already known.
	Thus, we can directly compute $\budget{B}{v} = \step{T_\circ}{v}{B}$.
	(Note that this reasoning also can be applied to the SCC decomposition of general games.)
	
	Secondly, using Thm.~\ref{stm:pipe}, we can derive bounds on the optimal bid:
	We know that the threshold $\budget{B}{v}$ in a particular state $v$ has to lie between the lower and upper bounds given by the theorem -- a linearly sized interval.
	This however does not immediately give us bounds on the bids.
	Using the above approach of processing vertices in reverse topological order, whenever we handle a given vertex $v$, all of its successors are already solved.
	Together, we know (i)~a linearly sized interval of potential \emph{thresholds} for $v$, say $[B^-, B^+]$ and (ii)~the exact thresholds in all successor vertices.
	Note that in order to use Thm.~\ref{stm:pipe} computationally, we first need to determine the continuous ratios $t_v$ for every vertex.
	We explain afterwards how this can be achieved in linear time, too.
	
	We define $T' = \min_{v' \in N(v)} \budget{B}{v'}$ the smallest threshold over all successors against $B$, i.e.\ the minimum budget \PO needs to win after winning the bid in $v$ (and paying for it).
	As an immediate observation, we see that an optimal bid can never be larger than $B^+ - T'$:
	If \PO would bid more than $B^+ - T'$, \PT bids 0 in response, leaving \PO with a budget of less than $T'$, which is required to win.
	
	For the lower bound, we prove that at least one optimal bid is at least $B^- - T'$.
	(This does not exclude optimal bids which are smaller than $B^- - T'$.)
	Suppose that the threshold budget is $\budget{B}{v'} = B_1$ and there is a winning strategy for \PO with a bid $b < B^- - T'$.
	We consider the bid $b' = B^- - T' > b$.
	If \PO wins with $b'$, a budget of $B_1 - b' = (B_1 - B^-) + T')$ is left, which is at least $T'$, since $B_1 \geq B^-$ by assumption.
	By definition, \PO can pick a successor from which a winning strategy with budget $T'$ or larger exists.
	For the losing case, recall that the bid $b$ was winning.
	This means that \PO can win if \PT wins by bidding $b + 1$.
	In particular, in every successor of $v$ a budget of $B_1$ is sufficient to win against $B - (b + 1)$ (which is \PT's budget afterwards).
	Thus, if \PO instead bids $b'$ and \PT wins (by bidding $b' + 1$), observe that $B - (b' + 1) < B - (b + 1)$, since $b + 1 > b' + 1$ -- \PT is left with even less budget than before.
	
	Together, we know that an optimal bid exists in the interval $[B^- - T', B^+ - T']$.
	Thus, we can restrict ourselves to checking all possible bids in this interval.
	Observe that $B^+ - B^-$ is linear in the size of the graph by Thm.~\ref{stm:pipe}, in particular it is bounded by the number of vertices times the largest continuous ratio.
	Moreover, we can apply the binary search idea of Thm.~\ref{stm:complexity}.
	In summary, we obtain a complexity of $\mathcal{O}(\log(B^+ - B^-) \cdot N(v))$ to determine $\budget{B}{v}$ for a vertex $v$ and budget $0 \leq B \leq B_2$.
	
	It remains to prove complexity and size bounds on $t_v$.
	First, we observe that given the ratios $t_{v'}$ of all successors, we can immediately compute $t_v = t_v^+ / (1 + t_v^+) \cdot (1 + t_v^-)$, where $t_v^+ = \max_{v' \in N(v)} t_{v'}$ and $t_v^- = \min_{v' \in N(v)} t_{v'}$ (using the results of, e.g., \cite[Sec.~3]{LLPSU99}).
	Note that $t_v^+ = \infty$ if $s \in N(v)$.
	In that case, we have $t_v = 1 + t_v^-$.
	As such, we can again obtain all ratios by a linear pass in reverse topological order.
	Moreover, the bit size of $t_v$ is bounded by the sum of bit sizes of $t_v^-$ and $t_v^+$, i.e.\ $|t_v|_\# \in \mathcal{O}(|t_v^+|_\# + |t_v^-|_\#)$, where $|v|_\#$ denotes the size of the representation of $v$.
	Since the ratio of sink and target are trivial (i.e.\ of size $1$), we obtain, as a crude upper bound, $|t_v|_\# \in \mathcal{O}(|V|^2)$ for all $v$.
	This means that evaluating the equation takes at most $\mathcal{O}(|V|^2 \log |V|)$ time (note that $|N(v)| \leq |V|^2$) and we can obtain $t_v$ for all vertices in time $\mathcal{O}(|V|^3 \log |V|)$.
	
	We also directly obtain a bound on the magnitude of $t_v$:
	Clearly, $t_v \leq 1 + t_v^-$, i.e.\ $t_v \in \mathcal{O}(|V|)$.
	Also, this bound is tight:
	In $\race{1,n}$, \PO needs $|V| - 2$ times the budget of \PT, since its required to win $|V|$ in a row without alternative.
	Consequently, the ``height'' of the pipe, i.e.\ $B^+ - B^-$ is at most of size $|V|^2$.
	
	Combining all results, we obtain that the overall complexity of this algorithm is bounded by
	\begin{equation*}
		\mathcal{O}(|V|^3 \log(|V|) + B_2 \cdot \log(|V|) \cdot |E|).
	\end{equation*}
	Note that if $B_2 \leq |V|^2$ we can employ our ``classical'' algorithm which simply applies binary search from $0$ to $B_2$ in reverse topological order, yielding a complexity of $\mathcal{O}(B_2 \cdot \log(B_2) \cdot |E|)$ (requiring for each vertex $\mathcal{O}(\log(B_2) \cdot N(v))$).
	Otherwise, i.e.\ $B_2 \geq |V|^2$, $B_2 \cdot \log(|V|) \cdot |E|$ dominates $|V|^3 \log(|V|)$ (recall that $|V| \leq |E|$), proving the claim.
\end{proof}
\stam{
\begin{proof}[Proof sketch](See supplementary material for the full proof.)

	In essence, we use four observations.
	First, since the game is a DAG, we can fully compute $\budget{B}{v}$ for each vertex and budget $0 \leq B \leq B_2$ at once by evaluating vertices in reverse topological order.
	Intuitively, each vertex only occurs at most once along any play in a DAG game.
	Thus, we only need to consider each vertex once.
	Second, we can exploit the budget bounds given by Thm.~\ref{stm:pipe} to obtain lower and upper bounds on an optimal bid.
	The size of this interval as given by Thm.~\ref{stm:pipe} depends on the magnitude of the continuous thresholds.
	Thirdly, we rely on actually knowing these thresholds.
	Thus, we give a bound on the size and computational complexity of determining them.
	Finally, applying binary search to this interval, using the insights of Thm.~\ref{stm:complexity}, yields the result.
\end{proof}}

\section{Experiments and Conjectures}
In this section, we present several experimental results which in turn motivate conjectures for general games.

\subsection*{The Pipe Theorem}
In our experiments, we observed that Thm.~\ref{stm:pipe} does not hold for all general graphs.
We depict the smallest bidding game we found where Thm.~\ref{stm:pipe} is violated in Fig.~\ref{fig:pipe_violated_general}.
We note that this game has an interesting structure:
It is a ``normal'' tug of war game, with a single edge added.
Moreover, whenever this ``gadget'' is a part of a game, the same problem arises.
However, this structure is not the only potential cause:
While the pipe theorem even seems to hold for tug of war games of up to 5 interior states (validated up to $B_2 = 10^7$), we observed that it is violated for 6 or more.

\begin{figure}[t]
	\centering
	\begin{tikzpicture}[auto,xscale=1.25]
		\draw node[sink] at (0,0) (0) {0};
		\draw node[state] at (1,0) (1) {1};
		\draw node[state] at (2,0) (2) {2};
		\draw node[state] at (3,0) (3) {3};
		\draw node[state] at (4,0) (4) {4};
		\draw node[target] at (5,0) (5) {5};
		
		\path[->]
			(1) edge (0)
			(1) edge[bend left] (2)
			(2) edge[bend left] (1)
			(1) edge[bend left=55] (3)
			(2) edge[bend left] (3)
			(3) edge[bend left] (2)
			(3) edge[bend left] (4)
			(4) edge[bend left] (3)
			(4) edge (5)
		;
	\end{tikzpicture}
	\caption{
		A small game where Thm.~\ref{stm:pipe} is violated.
	} \label{fig:pipe_violated_general}
\end{figure}
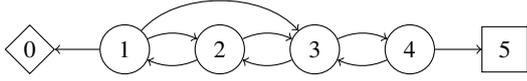

\subsection*{Conjectures on General Graphs}
Despite this apparently chaotic behaviour, we observed that a variant of Thm.~\ref{stm:pipe} seems to be satisfied in general.
\begin{conjecture}
	In any game and vertex $v$, we have that
	\begin{equation*}
		t_v \cdot B_2 - \mathcal{O}(\log B_2) \leq T_v(B_2).
	\end{equation*}
\end{conjecture}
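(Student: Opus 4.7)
The upper bound $T_v(B_2) \leq t_v \cdot B_2$ already holds in general (not only on DAGs), as the right-hand side of~\cref{stm:pipe} does not use acyclicity. Thus it suffices to prove a lower bound of the form $T_v(B_2) \geq t_v \cdot B_2 - C\log B_2$ for some constant $C = C(\G)$, i.e.\ to exhibit, for every sufficiently small $B_1$, a winning \PT strategy from $\langle v, B_1, B_2\rangle$. The plan is to reuse the strategy-transfer argument from the proof of the left-hand side of~\cref{stm:pipe}: \PT plays a memoryless continuous winning strategy with a small ``budget buffer'' $k = \Theta(\log B_2)$ set aside for discretisation losses, bidding $\lceil b\rceil$ whenever the continuous strategy would bid $b$, and copying the move choices.

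Concretely, fix $B_1 < t_v\cdot B_2 - C\log B_2$ and set $B_2^{\star} = B_2 - k$, with $k$ chosen so that $B_1 < t_v\cdot B_2^{\star}$ still holds. By~\cite[Theorem 7]{LLPSU99} there is a memoryless continuous winning strategy $\sigma_{\text{cont}}$ for \PT from $\langle v, B_1, B_2^{\star}\rangle$. Define $\sigma_{\text{disc}}$ as in the proof of~\cref{stm:pipe}. Exactly as there, $\sigma_{\text{disc}}$ wins as soon as it is budget-feasible, because every bid it makes is at least the corresponding continuous bid and the move choices coincide.

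The nontrivial point is the budget-feasibility: every time \PT wins a bid under $\sigma_{\text{disc}}$ she incurs at most one unit of rounding overhead, so the heart of the proof is to show that along any play consistent with $\sigma_{\text{disc}}$ the number $N$ of bids that \PT wins is $\mathcal{O}(\log B_2)$. I would attack this with a potential argument based on the continuous invariant from~\cite[Theorem 7]{LLPSU99}: at each visited vertex $u$, the ratio of budgets stays strictly below $t_u$, and a winning continuous bid by \PT can be lower-bounded by a constant fraction of her current ``gap to threshold''. Together with the fact that discrete bids are at least as large as continuous ones, this would yield a geometric decrease in the gap and hence $N = \mathcal{O}(\log B_2)$, which in turn makes the buffer $k$ of size $\Theta(\log B_2)$ sufficient.

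The main obstacle is precisely establishing this geometric decay, since a priori continuous winning bids can be arbitrarily small relative to the current budget, which would only give the trivial bound $N \leq B_2$. This is the same phenomenon as the (still open) conjecture $\gamestepbound{\G}{B} \in \mathcal{O}(\log B)$ mentioned in~\cref{sec:thresholdgeneral}; in fact, a proof of that logarithmic step bound (for plays consistent with \PO's winning strategy, by symmetry with $s\leftrightarrow t$) would immediately imply the present conjecture via the rounding argument above. The counterexample in~\cref{fig:pipe_violated_general} shows that one cannot hope for a constant pipe-slack in general, so the $\log B_2$ rate is likely close to tight, and any proof must explain why repeated passes around a cycle force \PT to invest non-trivially each time rather than coasting on arbitrarily small bids.
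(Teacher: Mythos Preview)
The statement you are trying to prove is a \emph{conjecture} in the paper, not a theorem; the paper offers no proof at all, only experimental evidence (the plots in \cref{fig:budget_difference}) and a heuristic reduction to a second conjecture about bids being proportional to budget. So there is nothing to compare your argument against.

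That said, your plan is essentially the same heuristic the authors sketch. You correctly observe that the rounding-transfer argument from the left-hand side of \cref{stm:pipe} goes through verbatim once one knows that \PT wins only $\mathcal{O}(\log B_2)$ bids along any play consistent with her strategy, and you correctly identify this step as the crux. The paper makes the same point in the paragraph following the two conjectures: the proportional-bid conjecture would give geometric decay of the total budget, hence $\gamestepbound{\G}{B_2}\in\mathcal{O}(\log B_2)$, hence the logarithmic pipe. Your proposal does not go beyond this; the ``potential argument based on the continuous invariant'' you suggest is exactly where the difficulty lies, since continuous winning bids for \PT can indeed be arbitrarily small relative to the current budget, and nothing in \cite{LLPSU99} rules this out on cyclic graphs. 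In short, you have rediscovered the reduction the authors already give, but the reduction target is itself open, so the conjecture remains unproven.
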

Consider Fig.~\ref{fig:budget_difference}, where we plot the difference $d(B) = t_v \cdot B - T_v(B)$ for a tug-of-war game with 21 states.
The x-axis, i.e.\ \PT's budget $B$, is scaled logarithmically.
If the conjecture holds, then $d(B) \in \mathcal{O}(\log B)$, which would appear as a line on such a graph.
And indeed, we clearly see a linear ``pipe''.
We observed similar graphs for all investigated games.

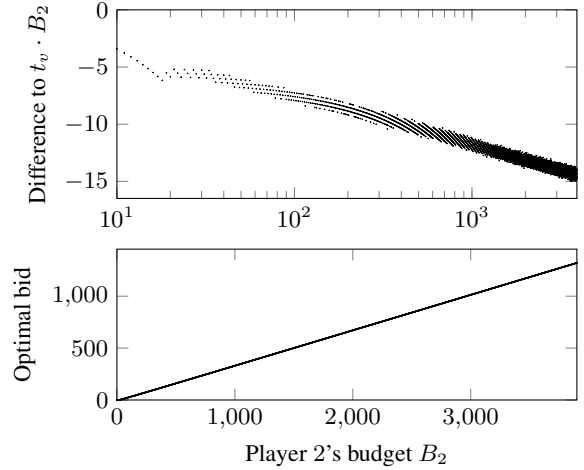
\begin{figure}[t]
	\centering
	\hspace{0.15\columnwidth}%
	\begin{minipage}[c]{0.7\columnwidth} 
	\begin{tikzpicture}[trim axis left,trim axis right]
	\begin{axis}[
			scale only axis,
			width=\columnwidth,height=2.5cm,
			ylabel=Difference to $t_v \cdot B_2$,
			legend pos=north west,xmode=log,
			xmin=10,xmax=3900,ymax=0
		]
		\addplot+[only marks,draw=black,mark size=0.25pt,mark=x] table [x index=0, y index=1, col sep=comma] {data/bid_difference.csv};
	\end{axis}
	\end{tikzpicture}

	\vspace{1ex}

	\begin{tikzpicture}[trim axis left,trim axis right]
	\begin{axis}[
			scale only axis,
			width=\columnwidth,height=2cm,
			xlabel=\PT's budget $B_2$,ylabel=Optimal bid,
			legend pos=north west,
			xmin=0,xmax=3900,ymin=0,
		]
		\addplot+[only marks,draw=black,mark size=0.25pt,mark=+] table [x index=0, y index=1, col sep=comma] {data/witness.csv};
	\end{axis}
	\end{tikzpicture}
	\end{minipage}
	\caption{
		Plot of $t_v \cdot B_2 - T_v(B_2)$ (logarithmic) and \PO's optimal bids in state 1 of a tug-of-war with 21 states.
	} \label{fig:budget_difference}
\end{figure}

Based on experimental evidence, we believe that the underlying reason is similar to the proof idea of Thm.~\ref{stm:pipe}, namely that for large budgets, the actual bids do not differ too much from the continuous behaviour.
\begin{conjecture}
	Winning bids are proportional to the current budget in play, i.e.\ for each vertex there is a ratio $r_v$ such that all winning bis are $b = r_v \cdot B_2 + \mathcal{O}(1)$.
\end{conjecture}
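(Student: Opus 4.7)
The plan is to derive the ratio $r_v$ from the continuous-bidding optimality condition and to show that every winning discrete bid lies within an additive constant of $r_v \cdot B_2$. First, I would fix a vertex $v$ and analyse the threshold configuration $\langle v, T_v(B_2), B_2 \rangle$. For sufficiently large $B_2$, the stabilization argument used in the proof of \cref{thm:periodicity} (together with \cref{stm:pipe}) ensures that upon winning, \PO optimally moves to a successor $v^- \in N(v)$ minimizing the continuous ratio $t_{v^-}$, while \PT optimally moves to a successor $v^+ \in N(v)$ maximizing $t_{v^+}$. A bid $b$ by \PO is then winning from the threshold configuration if and only if both $T_v(B_2) - b \ge T_{v^-}(B_2)$ (the case \PO wins the bidding) and $T_v(B_2) \ge T_{v^+}(B_2 - b - 1)$ (the case \PT wins the bidding by bidding $b+1$).

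Next, I would feed the pipe bounds $t_w(B-n) \le T_w(B) \le t_w B$ from \cref{stm:pipe} into these two winning conditions to obtain matching linear estimates
\begin{equation*}
    (1 - t_v/t_{v^+}) \, B_2 - \mathcal{O}(1) \;\le\; b \;\le\; (t_v - t_{v^-}) \, B_2 + \mathcal{O}(1).
\end{equation*}
Using the continuous-bidding recursion $t_v = t_{v^+}(1 + t_{v^-})/(1 + t_{v^+})$ of~\cite{LLPSU99}, a short algebraic manipulation gives
\begin{equation*}
    t_v - t_{v^-} \;=\; 1 - t_v/t_{v^+} \;=\; \frac{t_{v^+} - t_{v^-}}{1 + t_{v^+}} \;=:\; r_v,
\end{equation*}
so both sides collapse to $b = r_v \cdot B_2 + \mathcal{O}(1)$, which is the desired conclusion. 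This already settles the conjecture for DAGs, and more broadly for every vertex at which \cref{stm:pipe} is valid.

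The main obstacle is extending the argument to general graphs, where the experiments in the previous subsection show that \cref{stm:pipe} may be violated. One natural route is first to establish the preceding conjecture $T_v(B_2) = t_v B_2 + \mathcal{O}(\log B_2)$ and then substitute it into the two winning-bid inequalities above; unfortunately, this only yields $b = r_v B_2 + \mathcal{O}(\log B_2)$ rather than the desired $\mathcal{O}(1)$ slack. The hard part will be shaving off this logarithmic factor. I expect this requires an amortization argument coupling the analyses at consecutive budgets $B_2$ and $B_2 + 1$: the monotonicity of $T_v$ forces the average of $T_v(B_2 + 1) - T_v(B_2)$ to equal $r_v$, so over a sliding window of $\mathcal{O}(1)$ budgets the logarithmic fluctuations in $T_v$ and in $T_{v^-}$, $T_{v^+}$ should cancel. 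Making this coupling rigorous — in particular, verifying that $v^-$ and $v^+$ remain the same for nearby budgets and that the lower-order terms truly align — is where I expect the real technical difficulty to lie.
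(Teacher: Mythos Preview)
The statement is a \emph{conjecture} in the paper, not a theorem: the paper offers no proof whatsoever, only experimental evidence (the linear plot of optimal bids in the tug-of-war figure). There is therefore nothing in the paper to compare your argument against.

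That said, your DAG argument is essentially correct and goes beyond what the paper establishes. The two winning-bid inequalities you write down are exactly the characterization used implicitly in the proof of \cref{thm:periodicity} (the points $Q$ and $R$ in \cref{fig:u-climbing}), and feeding the pipe bounds into them together with the continuous recursion $t_v = t_{v^+}(1+t_{v^-})/(1+t_{v^+})$ indeed pins every winning bid to $r_v B_2 + \mathcal{O}(1)$ with $r_v = (t_{v^+}-t_{v^-})/(1+t_{v^+})$. So on DAGs the conjecture is a corollary of \cref{stm:pipe} and \cref{thm:periodicity}, and you have identified this cleanly.

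For general graphs, however, your proposal does not close the gap, and you say as much. The conjecture in the paper is made precisely in the subsection on general graphs, where \cref{stm:pipe} is observed to fail; the experimental support the paper cites is for tug-of-war with 21 states, a cyclic game. Your suggested route via the preceding $\mathcal{O}(\log B_2)$ conjecture, as you note, only yields $b = r_v B_2 + \mathcal{O}(\log B_2)$, not $\mathcal{O}(1)$. The amortization sketch you float would require controlling not just the fluctuations of $T_v$ but also of $T_{v^-}$ and $T_{v^+}$ simultaneously, and in cyclic games these are defined by the same mutually recursive system rather than by already-known successor values; it is not clear why the lower-order terms should cancel rather than accumulate. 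In short: you have proved the DAG case (which the paper does not), but for the general case your proposal remains at the level of a plan, which is also where the paper leaves it.
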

In \cref{fig:budget_difference} we also display optimal bids for \PO in relation to \PT's budget.
A clear linear dependence with a ratio of approximately $r_v \approx \frac{1}{3}$ is visible.

This implies our ``pipe conjecture'' as follows:
When bids are proportional to the budget, then the total budget in play decays exponentially.
Thus, the length of the game is logarithmic in the available budget, i.e.\ $\gamestepbound{\G}{B_2} \in \mathcal{O}(\log B_2)$ for a fixed game.
Recall that in Thm.~\ref{stm:pipe} we prove the lower bound by arguing that \PT needs a ``$+1$'' at most $|V|$ times to compensate for rounding.
With this general step bound, we can similarly argue that this is required at most logarithmic number of times.
In other words, \PO can exploit the ``rounding advantage'' only logarithmically often.
We also mention that this would then put the complexity of our general algorithm at $\mathcal{O}(B_2 \cdot \log(B_2) \cdot |E|)$.

\subsection*{Implementation and Performance}
We implemented our algorithm in Java (executed with OpenJDK 17) and ran it on consumer hardware (AMD Ryzen 3600).
Generation of games and visualization of results was done using Python scripts.

While not the focus of our evaluation, we observed that our implementation can easily handle large graphs and budgets.
For example, solving a tug of war game with 20 states and $B_2 = 10^6$ took around 1 minute (483 steps).


\section{Conclusion}
We study, for the first time, bidding games that combine poorman with discrete bidding. 
On the negative side, threshold budgets in poorman discrete-bidding games exhibit complex behavior already in simple games, in particular in games with cycles. On the positive side, we identify interesting structure: we prove determinacy, in DAGs, we relate the threshold budgets with continuous ratios, and prove that thresholds are periodic. Additionally, our implementation efficiently computes  exact solutions to non-trivial games. We particularly invite the interested reader to explore bidding games using it, the code will be available on demand.

Our work opens several venues for future work:

Theoretically, 
we left several open problems and conjectures. Beyond that, 
poorman discrete-bidding is more amendable to extensions when compared with poorman continuous-bidding, which quickly becomes technically challenging, or Richman discrete-bidding, which is a rigid mechanism. For example, it is interesting to introduce into the basic model,  multi-players or complex objectives, e.g., that take into account left over budgets~\cite{HDM12}. 

Practically, poorman is more popular than Richman bidding since it coincides with the popular first-price auction and discrete- is more popular than continuous-bidding since most if not all practical applications employ some granularity constraints on bids. It is interesting to develop applications based on these games. For example, to analyze and develop bidding strategies 
in sequential auctions or fair allocation of goods~\cite{BEF22}. 
Further, it is interesting to study {\em mechanism design}: synthesize an arena so that the game has guarantees (e.g.,~\cite{MM+22}).

\acknowledgements{This research was supported in part by ISF grant no. 1679/21, ERC CoG 863818 (FoRM-SMArt) and the European Union’s Horizon 2020 research and innovation programme under the Marie Skłodowska-Curie Grant Agreement No. 665385.}

\bibliography{ga}



\end{document}